\documentclass[journal,english]{IEEEtran}
\usepackage{cite}
\usepackage{mdframed}
\usepackage{epsfig,subfigure}
\usepackage{amssymb}
\usepackage[fleqn]{amsmath}
\usepackage{color}
\usepackage[fleqn]{amsmath}
\setlength{\mathindent}{0pt}

\hyphenation{op-tical net-works semi-conduc-tor IEEE-Xplore}
\def\BibTeX{{\rm B\kern-.05em{\sc i\kern-.025em b}\kern-.08em
    T\kern-.1667em\lower.7ex\hbox{E}\kern-.125emX}}
\usepackage{balance}

\usepackage{arydshln}
\bibliographystyle{IEEEtran}

\def\QED{\mbox{\rule[0pt]{1.5ex}{1.5ex}}}


%
\usepackage{graphicx}
\usepackage{color}
\usepackage[dvipsnames]{xcolor}

\definecolor{armygreen}{rgb}{0.29, 0.33, 0.13}

\usepackage{amssymb}
\usepackage{amsmath,amsfonts,amssymb}
\usepackage{verbatim}
\usepackage{stfloats}
\usepackage[bookmarks=false]{}
\usepackage{algorithm}
\usepackage{algcompatible}
\usepackage{tikz,pgfplots,tikzpeople}

\newtheorem{theorem}{Theorem}

\newtheorem{definition}{Definition}
\newtheorem{lemma}{Lemma}

\newtheorem{remark}{Remark}

\newtheorem{example}{Example}


\begin{document}
\date{}

\title{On the Extremal Source Key Rates for
\\Secure Storage over Graphs}
\author{Zhou Li
\thanks{Zhou Li is with Guangxi Key Laboratory of Multimedia Communications and Network Technology, Guangxi University, Nanning 530004, China (e-mail: lizhou@gxu.edu.cn).} 
}

\maketitle
\begin{abstract}
This paper investigates secure storage codes over graphs, where multiple independent source symbols are encoded and stored at graph nodes subject to edge-wise correctness and security constraints. For each edge, a specified subset of source symbols must be recoverable from its two incident nodes, while no information about the remaining sources is revealed. To meet the security requirement, a shared source key may be employed. The ratio between the source symbol size and the source key size defines the source key rate, and the supremum of all achievable rates is referred to as the source key capacity.

We study extremal values of the source key capacity in secure storage systems and provide complete graph characterizations for several fundamental settings. For the case where each edge is associated with a single source symbol, we characterize all graphs whose source key capacity equals one. We then generalize this result to the case where each edge is associated with multiple source symbols and identify a broad class of graphs that achieve the corresponding extremal capacity under a mild structural condition. In addition, we characterize all graphs for which secure storage can be achieved without using any source key.

\end{abstract}

\begin{IEEEkeywords}
Source Key Capacity, Extremal Rate, Secure Storage Codes.
\end{IEEEkeywords}

\allowdisplaybreaks
\section{Introduction}
\label{sec:intro}

Secure storage systems distribute data across multiple storage nodes while enforcing precise access and information-theoretic security guarantees. Such systems naturally arise in distributed storage networks, cloud platforms, and privacy-preserving data services, where reliability, availability, and confidentiality must be simultaneously ensured. In these settings, storage nodes may be partially observed or compromised, and security must hold even when only a subset of stored data is accessible.

At a fundamental level, a secure storage system stores multiple independent data objects, referred to as source symbols, across a collection of storage nodes. Each node stores a coded symbol that is a function of the source symbols and possibly some shared randomness. Two types of constraints govern the system. Correctness constraints specify which subsets of source symbols must be recoverable from certain collections of storage nodes, while security constraints require that any unauthorized observation reveals no information about protected source symbols. Designing encoding schemes that efficiently satisfy both constraints is a central problem in information-theoretic security.

In this paper, we consider a class of secure storage systems whose data access and secrecy requirements are described by a graph. Specifically, $K$ source symbols $W_1,\ldots,W_K$, each consisting of $L$ bits, together with a shared source key $\mathcal{Z}$ of $L_Z$ bits, are encoded into $N$ stored symbols $V_1,\ldots,V_N$ and placed on the nodes of a graph $G=(\mathcal{V},\mathcal{E})$. Each node corresponds to a storage server, while an edge ${V_i,V_j}\in\mathcal{E}$ models the possibility that an external observer jointly accesses the two associated servers. For notational simplicity, the same symbol $V_n$ is used to represent both a stored symbol and its hosting node. To each edge we assign either a collection of $M$ source symbols or none at all. If an edge is associated with $M$ source symbols, then the pair $(V_i,V_j)$ must enable recovery of those symbols while remaining information-theoretically independent of the other $K-M$ source symbols; if no source symbols are assigned, the pair $(V_i,V_j)$ must reveal no information about any of the $K$ source symbols. This graphical model allows for flexible and heterogeneous access structures across different server pairs. The performance of a secure storage code is quantified by the source key rate $L/L_Z$, and our goal is to determine, for a given graph $G$, the maximum achievable source key rate, referred to as the source key capacity.

\begin{figure}[h]
    \centering
    \includegraphics[width=0.48\textwidth]{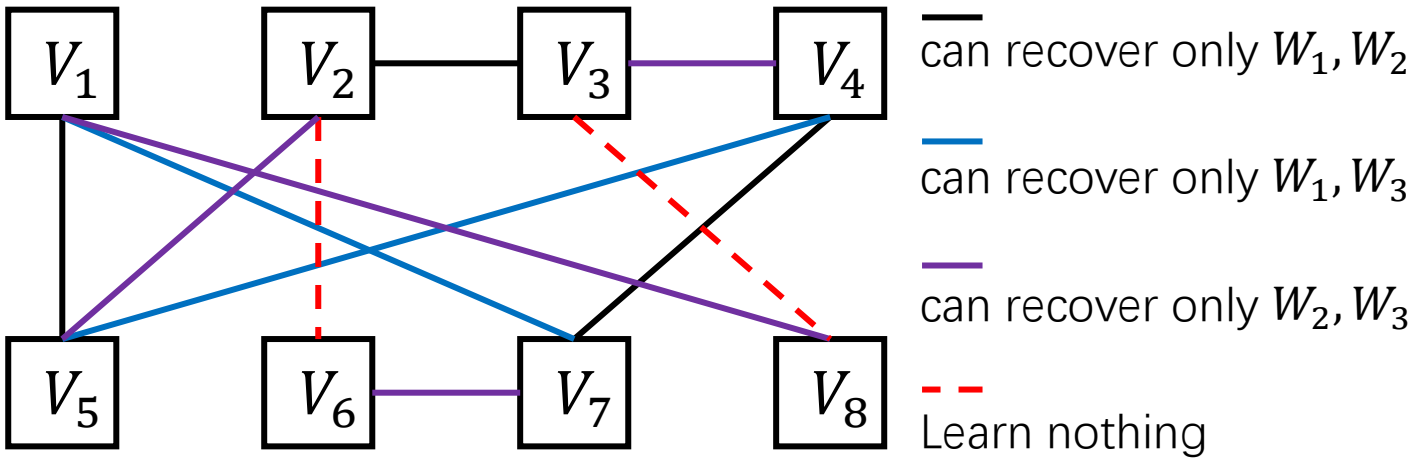}
    \caption{\small
An example of a secure storage problem with $K=3$ source symbols and $N=8$ coded symbols. The source key capacity of this graph is $1/2$ (see Theorem~\ref{thm:d2}; a corresponding code construction is shown in Fig.~\ref{fig5}). This model can be interpreted as storing three files $W_1, W_2, W_3$ over eight servers $V_1,\ldots,V_8$, where certain pairs of servers allow secure retrieval of specific files.
}
    \label{fig0}
\end{figure}

Although presented in a storage-system context, secure storage over graphs is closely related to several communication network problems. When the graph is bipartite, the model generalizes the conditional disclosure of secrets (CDS) problem \cite{SymPIR, Applebaum_Arkis_Raykov_Vasudevan,Li_Sun_CDS, Li_Sun_linearCDS, Wang_Ulukus_CDMS,Li_Zhang_CDSnoise,Li_Zhang_CDSnoisejournal} from a single secret to multiple secrets, allowing arbitrary subsets of source symbols to be conditionally disclosed. Through this connection, secure storage also relates to secret sharing \cite{Beimel_Survey}, particularly for graph-based access structures \cite{Sun_Shieh_Graph}, although existing multi-user secret sharing works \cite{khalesi2021capacity, wu2023capacity} differ in message settings and performance metrics. In addition, secure storage can be viewed as a secure network coding problem \cite{Ahlswede_Cai_etal, secure_nc, Cai_Chan}, and has been studied in storage systems without explicitly considering security over graphs \cite{Li_Sun_storageovergraph}.
Unlike prior work that focuses on multicast or nested message settings \cite{combination_network, Maheshwar_Li_Li, bidokhti2016capacity, Salimi_Liu_Cui}, we study how access and security patterns jointly interact through a general graph structure.

Graph-based formulations have appeared in a variety of network information theory problems. For example, index coding \cite{Yossef_Birk_Jayram_Kol_Trans} and coded caching \cite{Maddah_Ali_Niesen} can both be modeled using graphs or hypergraphs that specify decoding requirements. More recently, \cite{Sahraei_Gastpar} studied a related graph storage problem without security constraints, allowing coded symbols of unequal sizes. These connections suggest that characterizing capacity for general graph-based storage problems is highly nontrivial and often involves long-standing open questions.

To enforce security constraints, the encoding process in a secure storage system may incorporate a shared source key that is independent of all source symbols. The presence of such shared randomness introduces a fundamental tradeoff between security and storage efficiency. We quantify this tradeoff through the \emph{source key rate}, defined as the ratio between the size of each source symbol and the size of the source key. The supremum of all achievable source key rates is referred to as the \emph{source key capacity}. Understanding how the underlying graph structure constrains the source key capacity is the main objective of this work.

Secure storage over graphs has previously been studied in our earlier work \cite{Li_Sun_securestorage}. Using the same graph-based model, that work characterized extremal symbol rates, namely, the maximum achievable storage rate per coded symbol under correctness and security constraints, while allowing an arbitrary shared source key. The present paper, on the other hand, places no restriction on the source symbol rates and focuses on characterizing the extremal behavior of the source key rate. This change in the performance metric leads to different extremal regimes and reveals distinct structural properties of the underlying graphs.

\subsection*{Main Contributions}

We adopt an extremal perspective and study graph structures for which the source key capacity achieves its largest possible values under nontrivial security constraints.

We first consider the case where each edge is associated with a single source symbol. We show that whenever security constraints are present, the source key capacity is at most one, and we provide a complete characterization of all graphs that achieve this extremal value. The characterization is based on an alignment-based framework that captures the interaction among source symbols, security noise, and coded symbols, generalizing prior alignment ideas developed for single-secret CDS problems.

We then extend our analysis to the general setting in which each edge may be associated with multiple source symbols. Under a natural structural condition ensuring that each coded symbol is fully protected by noise, we establish that the source key capacity is upper bounded by the reciprocal of the number of source symbols per edge, and we characterize all graphs achieving this bound. The achievability relies on randomized constructions in higher-dimensional spaces.

Finally, we provide a tight necessary and sufficient condition under which a secure storage code over an arbitrary graph can be constructed without using any source key. Specifically, we show that keyless secure storage is possible if and only if, for every edge, the union of the common source sets of its two incident nodes exactly matches the set of source symbols required by that edge. This result completely characterizes extremal graphs admitting zero source key rate.

\section{Problem Statement and Definitions}\label{sec:model}
Consider an undirected graph $ G = (\mathcal{V}, \mathcal{E}) $ with $N$ nodes. The node set is given by $\mathcal{V} = \{V_1, \ldots, V_N\},$
and the edge set $ \mathcal{E} $ consists of unordered pairs of nodes, i.e., each edge is of the form $ \{V_i, V_j\} \subset \mathcal{V} $.
Each node is regarded as a server that has access to $K$ independent source symbols $\{W_k\}_{k \in [K]}\triangleq\{W_1,\cdots,W_K\}$, where each $W_k$ consists of $L$ independent and identically distributed (i.i.d.) symbols uniformly drawn from a finite field $ \mathbb{F}_q $. In addition, the servers share a common source key $\mathcal{Z}$, which consists of $L_Z$ independent and identically distributed (i.i.d.) symbols uniformly drawn from the same field $ \mathbb{F}_q $.

The source symbols and the key are mutually independent, and their joint entropy satisfies
\begin{eqnarray}
H\left(\{W_k\}_{k\in[K]}, \mathcal{Z}\right)
&=& \sum_{k\in[K]} H\left(W_k\right) + H\left(\mathcal{Z}\right), \label{ind}\\
H(W_k) &=& L,\quad H(\mathcal{Z}) = L_Z.
\label{h1}
\end{eqnarray}

Each node $V_n$, $n \in [N]$, transmits a message, also denoted by $V_n$ for notational simplicity. The message $\mathcal{V}=\{V_n\}_{n\in [N]}$ is a deterministic function of the source symbols and the key, i.e.,
\begin{eqnarray}
    H\!\left(\mathcal{V} \mid \{W_k\}_{k\in[K]}, Z \right) = 0. \label{messagevn}
\end{eqnarray}

Each edge $\{V_i,V_j\} \in \mathcal{E}$ is associated with a subset of source symbols 
$\{W_k\}_{k\in \mathcal{M}}$, where $\mathcal{M} \subseteq [K]$ is either empty or satisfies $|\mathcal{M}|=M$. 
This association can be described by a function $f: \mathcal{E} \to 2^{\{W_1, \dots, W_K\}}$, where $ f(\{V_i, V_j\}) = \{W_k\}_{k \in \mathcal{M}}.$
In other words, for each edge $\{V_i, V_j\}$, only the messages 
$\{W_k\}_{k \in \mathcal{M}}$ can be decoded from this edge. 
If $\mathcal{M}$ is empty, no information is revealed. 
Formally, for all $\{V_i, V_j\} \in \mathcal{E}$ with $f(\{V_i, V_j\}) = \mathcal{M}$, we have
{\setlength{\mathindent}{0cm}
\begin{align}
&(\mbox{Correctness}) ~H\left( \left\{W_k\right\}_{k \in \mathcal{M}} \mid V_i, V_j \right) = 0, \label{dec} \\
&(\mbox{Security}) ~~~~I\left(V_i, V_j; \left\{W_k\right\}_{k \in [K]\backslash\mathcal{M}}| \left\{W_k\right\}_{k \in \mathcal{M}} \right) = 0. \label{sec} 
\end{align}}

To interpret the security requirement, we model the adversary as an external eavesdropper capable of observing any single edge in the network, while having no access to additional information. Nodes with no incident edges impose no security constraints and are therefore trivial. Accordingly, throughout this work we restrict attention to graphs $G$ that contain no isolated nodes.

A \emph{secure storage code} is a mapping from the source symbols $\{W_k\}_{k\in[K]}$ and the source key $\mathcal{Z}$ to the coded symbols $\{V_n\}_{n\in[N]}$ that satisfies the correctness constraint~(\ref{dec}) and the security constraint~(\ref{sec}) specified by a graph $G=(\mathcal{V},\mathcal{E})$.

The achievable source key rate of a secure storage code is defined as
\begin{eqnarray}
R_Z \triangleq \frac{L}{L_Z}, \label{rate}
\end{eqnarray}
and the \emph{source key capacity}, denoted by $C$, is the supremum of all achievable rates. In particular, this supremum is taken in the asymptotic sense, so that rates of the form $R=\lim_{L\to\infty} L/L_Z$ are included.

\subsection{Graph Definitions}
To facilitate the presentation of our results, we introduce several graph-theoretic definitions in this subsection.

Given a graph $G=(\mathcal{V},\mathcal{E})$, we focus on a single source symbol $W_k$ and examine which edges are associated with it. This motivates the definition of the characteristic graph $G^{[k]}$.
\begin{definition}[Characteristic Graph $G^{[k]}$ of $W_k$] \label{def:cha}
Let $G=(\mathcal{V},\mathcal{E})$ be a graph. For each $k\in[K]$, define the characteristic graph $G^{[k]} = (\mathcal{V}^{[k]}, \mathcal{E}^{[k]}),$
where
\begin{eqnarray}
    \mathcal{V}^{[k]} &\triangleq& \{V^{[k]}_1, \ldots, V^{[k]}_N\}, \\
\{V^{[k]}_i, V^{[k]}_j\} \in \mathcal{E}^{[k]}
&\iff&
\{V_i, V_j\} \in \mathcal{E}. 
\end{eqnarray}
For each edge $\{V^{[k]}_i, V^{[k]}_j\}\in \mathcal{E}^{[k]}$, define the associated label function
\begin{eqnarray}
    f^{[k]}(\{V^{[k]}_i, V^{[k]}_j\}) \triangleq
\begin{cases}
\{W_k\}, & \text{if } W_k \in f(\{V_i, V_j\}),\\
\emptyset, & \text{otherwise}.
\end{cases}
\end{eqnarray}
\end{definition}

Fig.~\ref{fig1} illustrates an example of the graph $G$ and its corresponding characteristic graph $G^{[1]}$ associated with $W_1$.

\begin{figure}[h]
    \centering
    \includegraphics[width=0.48\textwidth]{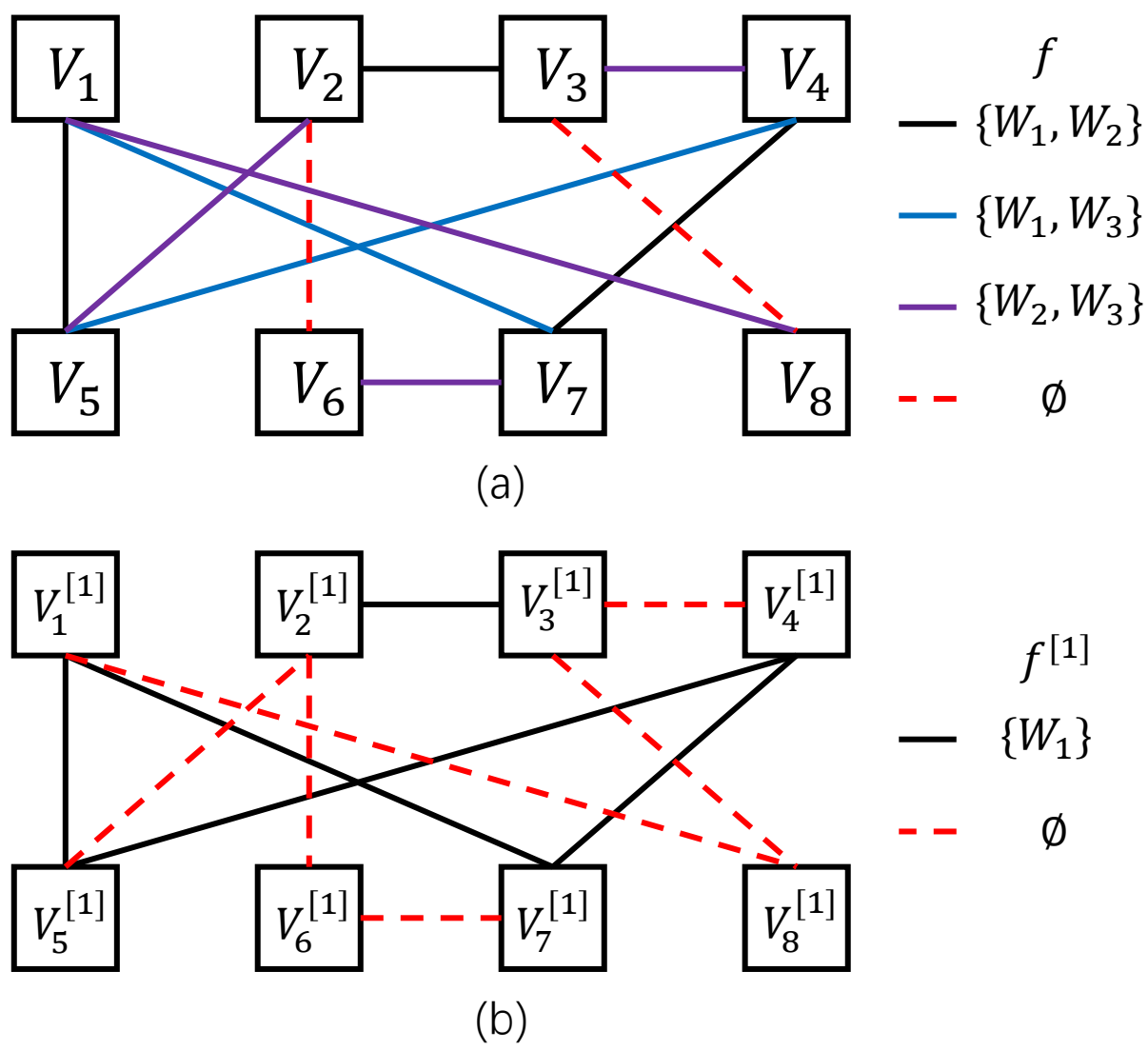}
    \caption{\small (a) An example graph $G$ with $K=3$ source symbols, $N=8$ coded symbols, and $M=2$, where each edge is associated with a subset of two source symbols; (b) the corresponding characteristic graph $G^{[1]}$ for source symbol $W_1$.}
    \label{fig1}
\end{figure}

In the characteristic graph $G^{[k]}=(\mathcal{V}^{[k]},\mathcal{E}^{[k]})$, a key distinction lies in whether an edge carries information about any source symbol. Accordingly, edges are classified as either qualified or unqualified, and this classification naturally extends to paths and connected components.

\begin{definition}[Qualified and Unqualified Edges, Paths, and Components]
Consider a graph $G = (\mathcal{V}, \mathcal{E})$.  
An edge $E \in \mathcal{E}$ is called \emph{qualified} if $f(E) \neq \emptyset$, and \emph{unqualified} if $f(E) = \emptyset$.  

A \emph{qualified} (resp. \emph{unqualified}) path is a path consisting entirely of qualified (resp. unqualified) edges. A qualified edge is called \emph{internal} if its two endpoints are connected by an unqualified path.

A \emph{qualified} (resp. \emph{unqualified}) component is a maximal induced subgraph of $G$ in which every pair of nodes is connected by a qualified (resp. unqualified) path.
\end{definition}

Note that the above definition applies to both $G$ and $G^{[k]}$.

As illustrated in Fig.~\ref{fig1}, $\{V_1, V_5\}$ is a qualified edge, while $\{V_1^{[1]}, V_8^{[1]}\}$ is an unqualified edge.  
The sequence $(\{V_1^{[1]}, V_8^{[1]}\}, \{V_8^{[1]}, V_3^{[1]}\}, \{V_3^{[1]}, V_4^{[1]}\})$ forms an unqualified path.  
Furthermore, $G$ constitutes a qualified component, and $G^{[1]}$ contains no internal qualified edges.  

In Fig.~\ref{fig4}, $\{V_1^{[1]}, V_3^{[1]}\}$ is an internal qualified edge.

In a graph $G=(\mathcal{V},\mathcal{E})$, our analysis requires identifying the source symbols that are shared by all edges incident to a given node. To formalize this notion, we introduce the following definition.

\begin{definition}[Common Sources $\mathcal{C}(V)$]
Let $V \in \mathcal{V}$ be a node in a graph $G = (\mathcal{V}, \mathcal{E})$.  
The set of common source symbols at node $V$ is defined as
\begin{eqnarray}
\mathcal{C}(V) \triangleq \bigcap_{i:\,\{V,V_i\}\in\mathcal{E}} f(\{V,V_i\}).
\end{eqnarray}
\end{definition}

For example, consider node $V_1$ in Fig.~\ref{fig1}:  $\mathcal{C}(V_1) = \{W_1, W_2\} \cap \{W_2, W_3\} \cap \{W_1, W_3\} = \emptyset.$
Similarly, for node $V_6$, we have $\mathcal{C}(V_6) = \{W_2, W_3\} \cap \emptyset = \emptyset.$  

For node $V_3$ in Fig.~\ref{fig6},  $\mathcal{C}(V_3) = \{W_1, W_2\} \cap \{W_1, W_3\} = \{W_1\}.$




\vspace{0.05in}

Finally, we identify a class of nodes whose incident edges impose no conflicting constraints. Specifically, a node $V$ in a graph $G=(\mathcal{V},\mathcal{E})$ is said to be \emph{degenerate} if all edges incident to $V$ are associated with the same set of source symbols. In this case, all constraints at $V$ can be satisfied by storing an identical set of source symbols at that node.

For notational convenience, degenerate nodes can be removed without affecting the validity of our results. This removal serves only to simplify the presentation; all statements and proofs remain valid even in the presence of degenerate nodes. We formalize this reduction as follows.

\begin{definition}[Non-degenerate Subgraph $\widetilde{G}$ of $G$]
Let $G=(\mathcal{V},\mathcal{E})$ be a graph. Define the set of degenerate nodes as
\begin{eqnarray}
\mathcal{V}_d \triangleq \bigl\{ V\in\mathcal{V} \,\big|\, 
f(\{V,V_i\})=\mathcal{C}(V),\ \forall\,\{V,V_i\}\in\mathcal{E} \bigr\}.
\end{eqnarray}
The \emph{non-degenerate subgraph} of $G$ is the induced subgraph on the node set $\mathcal{V}\setminus\mathcal{V}_d$, denoted by
 $\widetilde{G} \triangleq G[\mathcal{V}\backslash\mathcal{V}_d]$.
\end{definition}

\section{Results}
In this section, we present our main results, together with illustrative examples and related observations.

\subsection{$M=1$ and Extremal Graphs with $C=1$}
We first focus on the case $M=1$, for which extremal graphs with source key capacity $C=1$ admit a complete characterization.

\begin{theorem}\label{thm:d1}
For $M=1$, the source key capacity of a secure storage code over a graph $G$ equals $C=1$ if and only if the non-degenerate subgraph $\widetilde{G}$ of $G$ is nonempty and, for every $k\in[K]$, the characteristic graph $G^{[k]}$ contains no internal qualified edge.
\end{theorem}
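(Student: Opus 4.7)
The plan is to prove both directions after reducing to the non-degenerate case. Since the paper states that removing degenerate nodes preserves all results, I may assume $G=\widetilde{G}$ throughout; the hypothesis ``$\widetilde{G}$ nonempty'' then becomes ``$G$ is nonempty,'' and each remaining node has at least two incident edges with distinct labels.

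For achievability $(C\ge 1)$, suppose no $G^{[k]}$ has an internal qualified edge. For each $k$, contract every unqualified component of $G^{[k]}$ into a single super-vertex; by hypothesis no qualified edge becomes a self-loop, so the resulting multigraph $H_k$ has a finite chromatic number $\chi(H_k)$. Pick a prime power $q\ge\max_k\chi(H_k)$, draw $W_1,\dots,W_K,Z$ independently and uniformly from $\mathbb{F}_q$ (so $L=L_Z=\log_2 q$), fix a proper coloring $\lambda^{(k)}:\mathcal{V}\to\mathbb{F}_q$ of $H_k$ for each $k$, and set
\begin{equation*}
V_n \;=\; \sum_{k=1}^{K}\lambda_n^{(k)}\,W_k \;+\; Z.
\end{equation*}
On a qualified edge $\{V_i,V_j\}$ labeled $W_k$, the difference $V_j-V_i$ has a nonzero coefficient on $W_k$ (endpoints lie in different components of $H_k$) and vanishing coefficients on every $W_{k'}$ with $k'\ne k$ (the edge is unqualified in $G^{[k']}$, forcing $V_i,V_j$ into a common unqualified component there), so $W_k$ is recovered; the one-time pad $Z$ makes each $V_n$ marginally uniform, yielding security. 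Unqualified edges give $V_i=V_j$, which is uniform and independent of all sources. Hence $R_Z=1$ and $C\ge 1$.

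For the converse I show $C\ne 1$ whenever either hypothesis fails. If $\widetilde{G}$ is empty, every node can store its required symbols in cleartext with $L_Z=0$, so $C=\infty\ne 1$. If $\widetilde{G}$ is nonempty, the bound $C\le 1$ follows from standard entropy arguments: a non-degenerate node $V$ has an incident edge forcing $V$ to be independent of some source, so $H(V)\le L_Z$, which combined with the correctness of a qualified edge at $V$ gives $L_Z\ge L$. The substantive step is to establish $C<1$ strictly whenever some $G^{[k]}$ contains an internal qualified edge $\{V_i,V_j\}$ joined by an unqualified path $V_i=U_0-U_1-\cdots-U_t=V_j$. Each $(U_\ell,U_{\ell+1})$ satisfies $I(U_\ell,U_{\ell+1};W_k)=0$, each $U_\ell$ is marginally independent of $W_k$, and $H(W_k\mid U_0,U_t)=0$. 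I plan to chain these constraints into a summed, submodular-type inequality along the path that forces the shared key to carry strictly more entropy than one source symbol, yielding $L_Z>L$.

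The main obstacle is the strict inequality in this last step. Each individual security constraint along the path is simple, but the intermediate nodes $U_\ell$ must simultaneously mask $W_k$ from both of their path neighbors while the two endpoints still jointly determine $W_k$; turning this ``alignment'' structure into the quantitative gap $L_Z>L$---uniformly over path length and over all $K$ sources---is the heart of the proof and where I expect most of the technical effort to concentrate, generalizing the alignment framework developed for single-secret CDS and referenced in the introduction.
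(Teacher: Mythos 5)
Your achievability argument is essentially the paper's own construction (noise alignment via a single shared $Z$, coded-symbol alignment via constant assignments on unqualified components of each $G^{[k]}$, then summing over $k$); your use of a proper coloring of the contracted multigraph $H_k$ is a harmless variant of the paper's ``assign component index $u$'' scheme, and writing $V_n=\sum_k\lambda_n^{(k)}W_k+Z$ with a single copy of $Z$ (rather than summing $K$ per-graph copies and getting $KZ$, as the paper does) is actually cleaner, since it never threatens vanishing noise when $q\mid K$. Your reduction to $\widetilde G$ and your treatment of the case $\widetilde G=\emptyset$ (cleartext storage of common sources, $L_Z=0$, hence $C>1$) are also correct and match the paper's remark.

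The converse, however, has a genuine gap exactly where you flag it. You correctly identify that the entire difficulty is turning the path constraints into the quantitative statement $L_Z>L$, but you stop at ``I plan to chain these constraints into a summed, submodular-type inequality.'' The paper's proof fills this in with a sequence of lemmas that are not optional: (i) since every non-degenerate node $V$ has $\mathcal{C}(V)=\emptyset$, the security constraint at a single edge already forces $I(V;\{W_k\}_{k\in[K]})=0$, and combined with a qualified edge at $V$ this pins $H(V)=H(V\mid W_{[K]})=L+o(L)$ once one \emph{assumes} $L_Z=L+o(L)$ for contradiction; (ii) using $H(\mathcal{V}\mid W_{[K]})\le H(\mathcal{Z})=L+o(L)$ together with (i) establishes that the noise at \emph{all} nodes is a common $L+o(L)$ bits (noise alignment); (iii) for each edge $\{U_\ell,U_{\ell+1}\}$ on the unqualified path one shows $H(U_\ell,U_{\ell+1}\mid W_{[K]\setminus\{k\}})=L+o(L)$, splitting into the cases of an unqualified edge (uses (ii)) and a qualified edge not carrying $W_k$ (uses a separate lemma giving $H(U_\ell,U_{\ell+1})=2L+o(L)$ and then subtracting the decoded message); (iv) a submodularity chain across the path propagates this to $H(V_i,V_j\mid W_{[K]\setminus\{k\}})\le L+o(L)$ for the internal endpoints $V_i,V_j$; (v) correctness on $\{V_i,V_j\}$ then forces $H(V_i,V_j\mid W_{[K]\setminus\{k\}})\ge L+H(V_i\mid W_{[K]})=2L+o(L)$, giving $L\ge 2L+o(L)$, a contradiction. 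Steps (iii) and (v) are the ones you have not reduced to entropy inequalities, and (iii) in particular requires the auxiliary edge-entropy lemma that you have not stated; without them the quantitative gap $L_Z>L$ is asserted rather than proved. Note also that the paper proves strictness by \emph{assuming} $R_Z=1$ and deriving a contradiction (so the $o(L)$ bookkeeping must be done carefully), rather than by directly bounding $L_Z$ from below as your phrasing suggests; with your current sketch it is not clear that the argument survives taking $L\to\infty$.
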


\begin{remark}
When $\widetilde{G}$ is empty, all nodes in $G$ are incident only to edges associated with the same set of source symbols. It follows that if $G$ contains only a single type of qualified edge, no source key is needed to ensure security (see Fig.~\ref{fig2} for an example).
\end{remark}

\begin{figure}[h]
    \centering
    \includegraphics[width=0.48\textwidth]{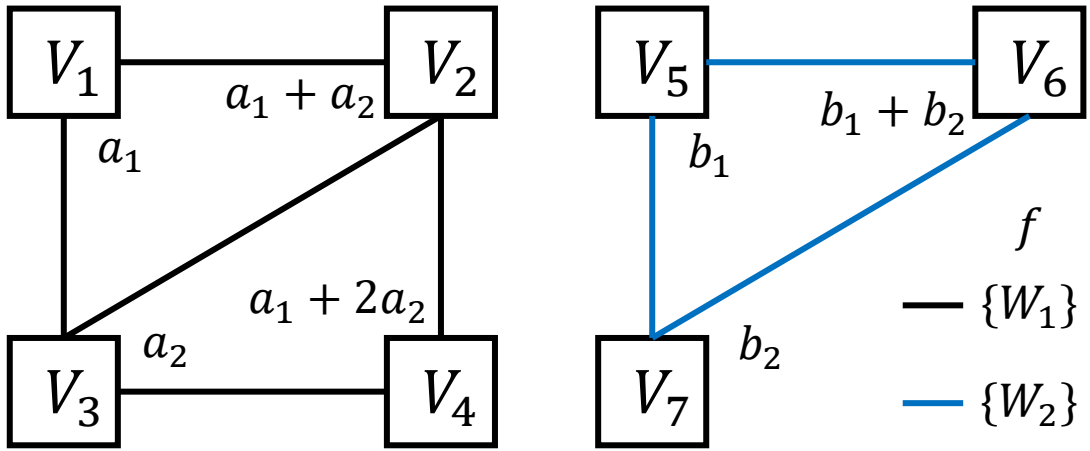}
    \caption{\small  An example of a graph $G$ with $K=2$, $N=7$, and $M=1$, where $\widetilde{G}=\emptyset$. Here, $W_1=(a_1,a_2)$ and $W_2=(b_1,b_2)$, and no key is required.
}
    \label{fig2}
\end{figure}

The proof of Theorem~\ref{thm:d1} appears in Section~\ref{sec:thm1}. To build intuition, we consider two examples. The first example in Fig.~\ref{fig3} illustrates the sufficiency of the conditions in Theorem~\ref{thm:d1}, under which the source key capacity equals $C=1$.

\begin{figure}[h]
    \centering
    \includegraphics[width=0.48\textwidth]{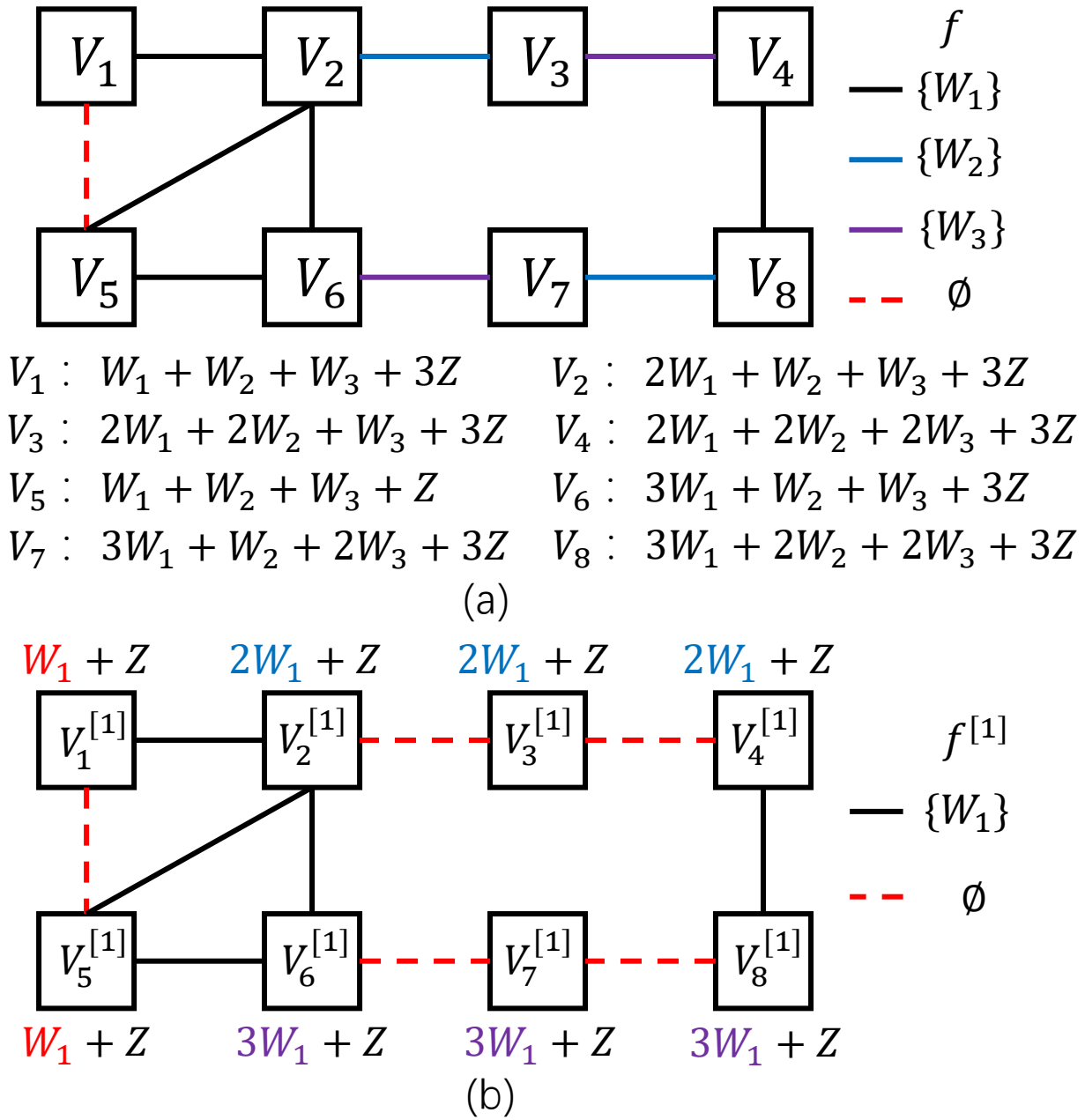}
    \caption{\small
(a) An example graph $G$ with $K=3$, $N=8$, and $M=1$, and (b) the characteristic graph $G^{[1]}$ corresponding to $W_1$. The source key capacity over $G$ is $1$, and a capacity-achieving code construction is illustrated.
}
    \label{fig3}
\end{figure}

\begin{example}\label{ex1}
Consider the secure storage instance shown in Fig.~\ref{fig3}. Each node is subject to the security requirement $L_Z \geq L$ and the key rate constraint $R_Z \leq 1$. We show that the capacity $R_Z=1$ is achievable by explicitly constructing a coding scheme.

Assume that each source symbol $W_k$ takes values in $\mathbb{F}_5$. We first introduce a single random variable $Z$, uniformly distributed over $\mathbb{F}_5$, which is shared across all nodes. This common noise is used at every node to satisfy the security constraints, a strategy referred to as \emph{noise alignment} (see Lemma~\ref{lemma:noise}).

Next, the coded symbols are designed by treating each source symbol $W_k$ independently through its characteristic graph $G^{[k]}$. Consider $W_1$ and the corresponding characteristic graph $G^{[1]}$ depicted in Fig.~\ref{fig3}(b). Since an unqualified component cannot disclose any information about $W_1$, all nodes within the same unqualified component are assigned identical coded symbols, a procedure known as \emph{coded symbol alignment} (see Lemma~\ref{lemma:signal}). Specifically, we set $V_1^{[1]}=V_5^{[1]}=W_1+Z,\quad
V_2^{[1]}=V_3^{[1]}=V_4^{[1]}=2W_1+Z,\quad
V_6^{[1]}=V_7^{[1]}=V_8^{[1]}=3W_1+Z,$
where different unqualified components are assigned distinct linear combinations, as indicated by the coloring in Fig.~\ref{fig3}(b).

Because $G^{[1]}$ contains no internal qualified edge, every qualified edge connects two different unqualified components. Consequently, the coded symbols observed on such an edge form linearly independent combinations of $W_1$ and $Z$, allowing $W_1$ to be recovered, as illustrated by the edge $\{V_1^{[1]},V_5^{[1]}\}$ in Fig.~\ref{fig3}(b).

Finally, the overall coded symbol at each node is obtained by summing the contributions from all $G^{[k]}$. This construction guarantees that, for every edge, the desired source symbol appears with distinct coefficients to ensure correctness, while all undesired source symbols and the noise are aligned, thereby ensuring security.
\end{example}


The second example (see Fig.~\ref{fig4}) is used to explain the `only if' part, i.e., when $\widetilde{G} \neq \emptyset$ and the graph $G$ does not satisfy the condition in Theorem \ref{thm:d1}, 
then the source key capacity $C < 1$.

\begin{figure}[h]
    \centering
    \includegraphics[width=0.48\textwidth]{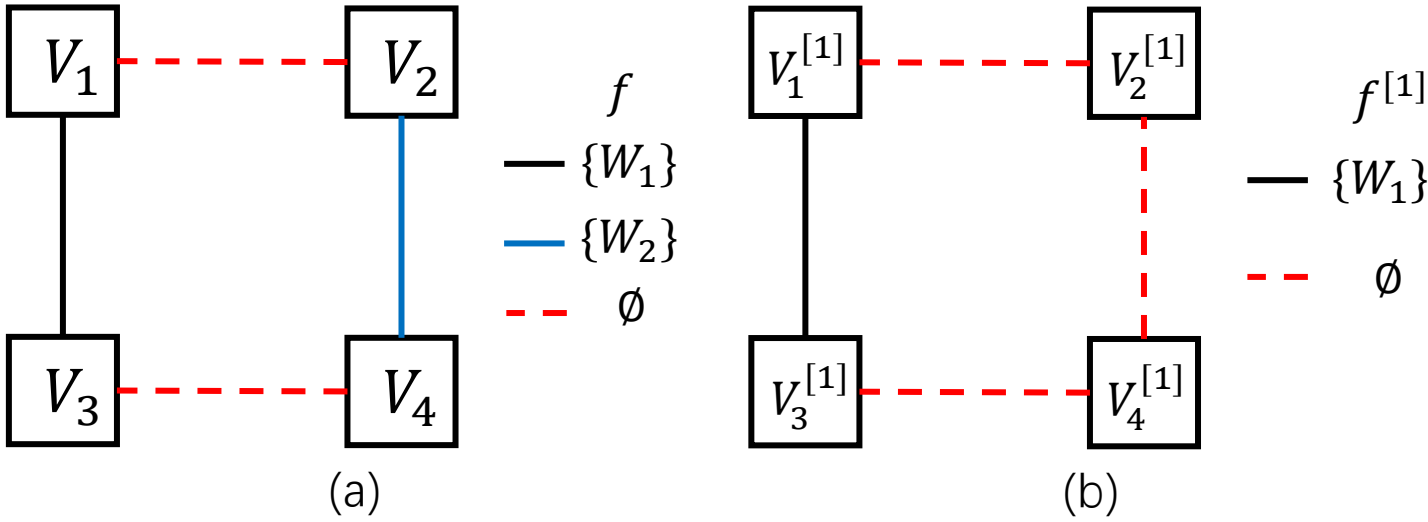}
    \caption{\small   (a) An example graph $G$ with $K=2, N=4, M=1$ and (b) its $G^{[1]}$ of $W_1$. The source key capacity over $G$ cannot be $1$.
}
    \label{fig4}
\end{figure}

\begin{example}\label{ex2}
Consider the secure storage instance illustrated in Fig.~\ref{fig4}. In this example, the graph $G^{[1]}$ contains a qualified edge $\{V_1^{[1]},V_3^{[1]}\}$ that lies entirely within an unqualified path composed of the edges $\{V_1^{[1]},V_2^{[1]}\}$, $\{V_2^{[1]},V_4^{[1]}\}$, and $\{V_4^{[1]},V_3^{[1]}\}$. We argue that the source key capacity cannot be $C=1$, or equivalently $L_Z \neq L$, by providing an intuitive contradiction-based explanation (neglecting $o(L)$ terms).

Suppose, for the sake of argument, that $L_Z=L$. Then, as shown in Fig.~\ref{fig4}(a), all nodes $V_1,V_2,V_3,$ and $V_4$ must employ a common noise variable in order to satisfy the security constraints, following the noise alignment principle in Lemma~\ref{lemma:noise}. Meanwhile, Fig.~\ref{fig4}(b) shows that the path $\left(\{V_1^{[1]},V_2^{[1]}\},\{V_2^{[1]},V_4^{[1]}\},\{V_4^{[1]},V_3^{[1]}\}\right)$ forms an unqualified path. As a result, all nodes along this path are forced to store identical coded symbols related to $W_1$, a phenomenon referred to as coded symbol alignment and formally characterized using conditional entropy arguments (see Lemma~\ref{lemma:signal}).

Consequently, nodes $V_1$ and $V_3$ must contain the same information about $W_1$. This, however, contradicts the requirement that $W_1$ be recoverable from the qualified edge $\{V_1,V_3\}$. Therefore, the assumption $L_Z=L$ cannot hold in this setting. A rigorous proof of this argument, based on entropy inequalities, is provided in Section~\ref{sec:thm21}.
\end{example}

\subsection{Arbitrary $M$ and Extremal Graphs with $C=1/M$}

We now generalize Theorem~\ref{thm:d1} to the case of an arbitrary number of associated source symbols per edge, i.e., $M$. We focus on a class of graphs in which the non-degenerate subgraph is nonempty and each non-degenerate node is not associated with any source symbol. For this class, all extremal graphs achieving a source key capacity of $C=1/M$ can be fully characterized by the following result.

\begin{theorem}\label{thm:d2}
Let $G=(\mathcal{V},\mathcal{E})$ be a graph such that its non-degenerate subgraph $\widetilde{G}$ is nonempty and satisfies $\mathcal{C}(V)=\emptyset$ for all $V\in\mathcal{V}\backslash\mathcal{V}_d$.
Then, for this class of graphs, the source key capacity of a secure storage code equals $C=1/M$ if and only if, for every $k\in[K]$, the characteristic graph $G^{[k]}$ of the source symbol $W_k$ contains no internal qualified edge.
\end{theorem}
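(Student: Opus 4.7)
The plan is to separate the equivalence into achievability ($C\geq 1/M$) and converse ($C\leq 1/M$), and in both directions to extend the alignment-based framework behind Theorem~\ref{thm:d1} to the vector-key regime $L_Z=ML$. The two workhorses are noise alignment (Lemma~\ref{lemma:noise}) and coded-symbol alignment (Lemma~\ref{lemma:signal}), applied per characteristic graph $G^{[k]}$ and then assembled over $k\in[K]$.

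For achievability, I would build a linear code over a large field $\mathbb{F}_q$ whose key is a block of $M$ independent noise vectors $Z_1,\ldots,Z_M\in\mathbb{F}_q^L$. Mirroring Example~\ref{ex1}, for each characteristic graph $G^{[k]}$ I partition the nodes into unqualified components and assign every node within one component a common coefficient pair $(\alpha_c^{[k]},\mathbf{b}_c^{[k]})\in\mathbb{F}_q\times\mathbb{F}_q^M$ for $W_k$ and the noise, chosen so that distinct components receive signatures in general position. Node $V_n$ then stores the combination of its contributions across $k\in[K]$. For a qualified edge $\{V_i,V_j\}$ with source set $\mathcal{M}$, correctness reduces to a full-rank condition isolating $\{W_k\}_{k\in\mathcal{M}}$: for every $k\in\mathcal{M}$, the no-internal-qualified-edge hypothesis places $V_i^{[k]}$ and $V_j^{[k]}$ in distinct unqualified components, so their $W_k$-coefficients are independently chosen and the corresponding linear block is invertible. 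Security for $k\notin\mathcal{M}$ reduces to the $W_k$-columns at $V_i,V_j$ lying in the span of the noise columns; this is automatic because $V_i^{[k]},V_j^{[k]}$ lie in a single unqualified component, so they share a $W_k$-coefficient that is absorbed into the common noise. A Schwartz--Zippel union bound over all edge-level rank conditions then yields a valid coefficient choice with $L/L_Z=1/M$.

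For the converse, I would argue by contradiction. Suppose a code achieves $L_Z=ML$ while some $G^{[k]}$ contains an internal qualified edge $\{V_i^{[k]},V_j^{[k]}\}$, witnessed in $G$ by an unqualified-in-$W_k$ path $V_i=V_{n_0},V_{n_1},\ldots,V_{n_t}=V_j$. The assumption $\mathcal{C}(V)=\emptyset$ at each non-degenerate node allows noise alignment to propagate along this path, forcing the masking randomness at $V_{n_0},\ldots,V_{n_t}$ to sit in a common $ML$-dimensional subspace of the key. Coded-symbol alignment applied to the same path inside $G^{[k]}$ then equates, in the conditional-entropy sense, the $W_k$-information carried by $V_{n_0}$ and $V_{n_t}$. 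Combining this with the correctness requirement $H(W_k\mid V_i,V_j)=0$ on the internal qualified edge and chaining entropy inequalities yields $L_Z\geq ML+L-o(L)$, contradicting $L/L_Z=1/M$.

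The main obstacle is the achievability for $M>1$: a single pool of only $M$ noise scalars per symbol position must mask $K-M$ distinct undesired symbols on every qualified edge, even as the associated set $\mathcal{M}$ varies across edges. The structural condition $\mathcal{C}(V)=\emptyset$ is what keeps this feasible, because no non-degenerate node is implicitly committed to a specific source symbol; verifying that the generic-position argument closes simultaneously over all edge-level constraints, and that the resulting code meets the $1/M$ rate with equality rather than slack, is where the bulk of the technical work lies. By comparison, the converse is, once both alignment lemmas are in hand, a direct vector-valued extension of the reasoning behind Example~\ref{ex2}.
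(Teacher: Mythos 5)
Your architecture mirrors the paper's: the converse invokes noise alignment (Lemma~\ref{lemma:noise}) and coded-symbol alignment (Lemma~\ref{lemma:signal}) along an unqualified path, culminating in an entropy contradiction on the internal qualified edge, and achievability is a randomized linear code certified by Schwartz--Zippel. Within that shared plan, two issues remain, one of which undermines your achievability.

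The substantive issue is the parameterization of the code. You assign each unqualified component $c$ of $G^{[k]}$ a scalar source-coefficient $\alpha_c^{[k]}\in\mathbb{F}_q$ and a length-$M$ noise-coefficient $\mathbf{b}_c^{[k]}\in\mathbb{F}_q^M$, with distinct components in ``general position.'' The per-component contribution is then $\alpha_c^{[k]}W_k+\sum_m b_{c,m}^{[k]}Z_m$, and $V_n=\sum_k V_n^{[k]}$ has the same size as a single source symbol. On a qualified edge $\{V_i,V_j\}$ the pair supplies only $2L$ symbols, which cannot reveal the required $ML$ symbols of $\{W_k\}_{k\in\mathcal{M}}$ once $M>2$. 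Further, for $k\in\mathcal{M}$ the two endpoints lie in distinct components, so $\mathbf{b}_{c_i}^{[k]}\neq\mathbf{b}_{c_j}^{[k]}$ and the noise terms do not cancel in $V_i-V_j$, which breaks the subtraction-based decoding and is also inconsistent with your own security argument (which needs the noise to be ``common''). The paper resolves both problems by flipping the roles: the per-component coefficient is the vector $\mathbf{h}_u^{[k]}\in\mathbb{F}_q^{M\times1}$ multiplying the scalar $W_k$, while the noise $\mathbf{Z}\in\mathbb{F}_q^{M\times1}$ appears with the identity coefficient at every node. Then each stored $V_n$ is an $M$-vector, $V_i-V_j=\mathbf{H}_{ij}\{W_k\}_{k\in\mathcal{M}}$ is noise-free and $M$-dimensional, and correctness reduces cleanly to $\det(\mathbf{H}_{ij})\neq0$ over all qualified edges, which Schwartz--Zippel handles for $q>M|\mathcal{E}|$.

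The minor issue: to conclude $C=1/M$ (and not merely $C\geq1/M$) under the ``if'' hypothesis, you also need the bound $C\leq1/M$. The paper derives it directly from correctness on any single qualified edge and the independence property $I(V;\{W_k\}_{k\in[K]})=0$ when $\mathcal{C}(V)=\emptyset$, without any alignment machinery. You note that the rate must be met with equality rather than slack but do not supply this step.
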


\begin{remark}
Theorem~\ref{thm:d2} subsumes Theorem~\ref{thm:d1} as a special case. Indeed, when $M=1$, every non-degenerate node necessarily satisfies $\mathcal{C}(V)=\emptyset$, since by definition a non-degenerate node must be incident to edges associated with distinct source symbols.
\end{remark}

The proof of Theorem \ref{thm:d2} is presented in Section \ref{sec:thm2}. An example is given  in Fig.~\ref{fig5} to explain the code construction of the `if' part.

\begin{figure}[h]
    \centering
    \includegraphics[width=0.48\textwidth]{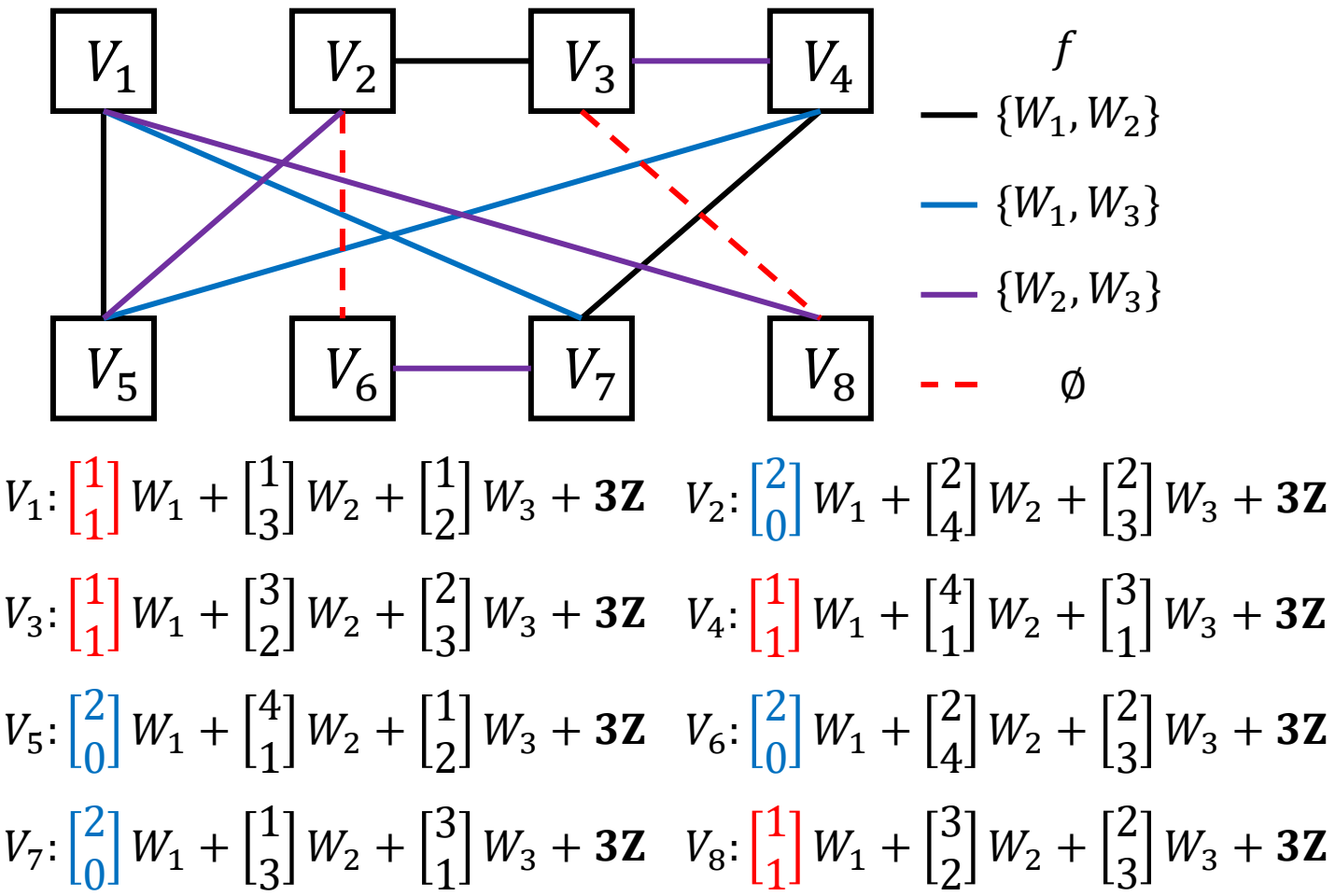}
    \caption{\small
An example graph $G$ with $K=3$, $N=8$, and $M=2$, together with a code construction that achieves a source key capacity of $1/2$. The precoding coefficients for each source symbol are vectors, whereas only scalar coefficients are required in \cite{Li_Zhang_CDSnoisejournal}.
}
    \label{fig5}
\end{figure}

\begin{example}
Consider the graph $G$ shown in Fig.~\ref{fig5}. The coding strategy follows the same high-level principle as in Example~\ref{ex1}. Specifically, each source symbol $W_k$ is treated independently together with its characteristic graph $G^{[k]}$. For each unqualified component of $G^{[k]}$, a generic linear combination is assigned. For $W_1$, these components are indicated by different colors in Fig.~\ref{fig5}. The final coded symbol stored at each node is obtained by summing the contributions from all graphs $G^{[k]}$.

In this example, each $W_k$ takes values in $\mathbb{F}_5$, and an independent noise vector ${\bf Z}\in\mathbb{F}_5^{2\times 1}$ is employed, where ${\bf Z}=(Z_1,Z_2)^{\mathsf T}$ is uniformly distributed. Compared with Example~\ref{ex1}, which corresponds to the case $M=1$, the case $M=2$ imposes a more stringent correctness requirement. Instead of merely ensuring distinct coefficients for the desired source symbol, the resulting coefficient matrix must be full rank.

Designing such coefficients is generally nontrivial. For this small instance, an explicit construction is illustrated in Fig.~\ref{fig5}. For the general case, the achievability proof in Section~\ref{sec:thm22} relies on a randomized coding argument.
\end{example}

\subsection{Arbitrary $M$ and Extremal Graphs without any Source Key}

Finally, we consider extremal graphs for which a secure storage code can be constructed without using any source key. All such graphs are completely characterized by the following theorem.

\begin{theorem}\label{thm:2d}
A secure storage code over a graph $G = (\mathcal{V}, \mathcal{E})$ can be constructed without using any source key, i.e., $L_Z = 0$, if and only if, for every edge $\{V_i, V_j\} \in \mathcal{E}$,
$\mathcal{C}(V_i) \cup \mathcal{C}(V_j) = f(\{V_i, V_j\})$.
\end{theorem}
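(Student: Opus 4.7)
The plan is to prove both directions by separating the construction on the sufficiency side from a structural property on the necessity side. For the if direction, assuming $\mathcal{C}(V_i)\cup\mathcal{C}(V_j)=f(\{V_i,V_j\})$ for every edge, I would store $V_i=\{W_k\}_{k\in\mathcal{C}(V_i)}$ at each node and use no source key. Correctness is immediate: the pair $(V_i,V_j)$ reveals exactly $\{W_k\}_{k\in\mathcal{C}(V_i)\cup\mathcal{C}(V_j)}=\{W_k\}_{k\in f(\{V_i,V_j\})}$. Security is equally direct, since $(V_i,V_j)$ is a function only of source symbols indexed by $f(\{V_i,V_j\})$ and hence reveals nothing about the remaining sources, giving $L_Z=0$.

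For the only-if direction, suppose a keyless secure storage code exists. The core claim is that each $V_i$ is a deterministic function of $\{W_k\}_{k\in\mathcal{C}(V_i)}$. For every edge $\{V_i,V_j\}$ incident to $V_i$, the security constraint gives $I(V_i;\{W_k\}_{k\notin f(\{V_i,V_j\})}\mid \{W_k\}_{k\in f(\{V_i,V_j\})})=0$; combining this with the keyless condition $H(V_i\mid \{W_k\}_{k\in[K]})=0$ yields $H(V_i\mid \{W_k\}_{k\in f(\{V_i,V_j\})})=0$. Thus $V_i$ is functionally determined by each such source subset.

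I would then invoke an intersection principle: if $V_i$ is determined by $\{W_k\}_{k\in A_1}$ and by $\{W_k\}_{k\in A_2}$, mutual independence of the sources forces determination by $\{W_k\}_{k\in A_1\cap A_2}$. This can be verified through the identity $H(V_i\mid \{W_k\}_{k\in A_1\cap A_2})=I(V_i;\{W_k\}_{k\in A_2\setminus A_1}\mid \{W_k\}_{k\in A_1\cap A_2})$, which vanishes because, conditioned on $\{W_k\}_{k\in A_1\cap A_2}$, $V_i$ is a function of $\{W_k\}_{k\in A_1\setminus A_2}$ and this block is independent of $\{W_k\}_{k\in A_2\setminus A_1}$. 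Iterating over all edges incident to $V_i$ gives $H(V_i\mid \{W_k\}_{k\in\mathcal{C}(V_i)})=0$.

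Finally, correctness at edge $\{V_i,V_j\}$ combined with the above yields $H(\{W_k\}_{k\in f(\{V_i,V_j\})}\mid \{W_k\}_{k\in\mathcal{C}(V_i)\cup\mathcal{C}(V_j)})=0$, and mutual independence then forces $f(\{V_i,V_j\})\subseteq\mathcal{C}(V_i)\cup\mathcal{C}(V_j)$. The reverse inclusion is immediate from the definition of $\mathcal{C}(\cdot)$, since each endpoint's common-source set is an intersection that includes $f(\{V_i,V_j\})$ as one of its terms. The main obstacle I anticipate is the intersection step; the remaining arguments are routine bookkeeping with entropy identities.
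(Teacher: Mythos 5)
Your proof is correct, and while the necessity argument lands in essentially the same place as the paper's, both halves take a somewhat different route that is worth noting. For necessity, the paper first proves an ``independence of non-common sources'' lemma (for each $i\notin\mathcal{C}(V)$ it picks an incident edge whose label excludes $W_i$, applies the security constraint, and chains over all such $i$), and only then combines it with keylessness to get $H(V\mid\{W_k\}_{k\in\mathcal{C}(V)})=o(L)$; it finishes with a cardinality contradiction $ML\le H(V_i,V_j)\le|\mathcal{C}(V_i)\cup\mathcal{C}(V_j)|L<ML$. You instead bring in keylessness immediately to get $H(V_i\mid\{W_k\}_{k\in f(\{V_i,V_j\})})=0$ for each incident edge and then apply an intersection principle for deterministic dependence on independent blocks; your identity $H(V_i\mid\{W_k\}_{k\in A_1\cap A_2})=I(V_i;\{W_k\}_{k\in A_2\setminus A_1}\mid\{W_k\}_{k\in A_1\cap A_2})+H(V_i\mid\{W_k\}_{k\in A_2})$ together with the conditional independence of $\{W_k\}_{k\in A_1\setminus A_2}$ and $\{W_k\}_{k\in A_2\setminus A_1}$ makes this airtight, and you then read off the inclusion $f(\{V_i,V_j\})\subseteq\mathcal{C}(V_i)\cup\mathcal{C}(V_j)$ directly from correctness rather than via a counting contradiction. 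For sufficiency, your construction is genuinely simpler: you store the raw symbols $\{W_k\}_{k\in\mathcal{C}(V_n)}$ at each node, which makes correctness and security immediate, whereas the paper stores generic linear combinations $\mathbf{H}_n\{W_k\}_{k\in\mathcal{C}(V_n)}$ and invokes the Schwartz--Zippel lemma to certify invertibility of the per-edge decoding matrices; since the theorem places no constraint on per-node storage or on the symbol rate, the identity encoding suffices and the randomized argument buys nothing here. The one cosmetic caveat is that the paper also treats the asymptotic regime $L_Z=o(L)$ (hence the $o(L)$ slack throughout), while your argument is written for exact $L_Z=0$, which is all the theorem statement requires.
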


In words, Theorem~\ref{thm:2d} requires that, for each edge, the union of the common source sets associated with its two incident nodes coincides with the set of $M$ desired source symbols. The intuition behind this condition is as follows. The total storage available on any edge is exactly $M \times L$, ignoring $o(L)$ terms, and must be entirely occupied by the $M$ desired source symbols, leaving no room for any additional information. Note that an edge may be either qualified or unqualified. In the latter case, $M=0$.

\begin{figure}[h]
    \centering
    \includegraphics[width=0.48\textwidth]{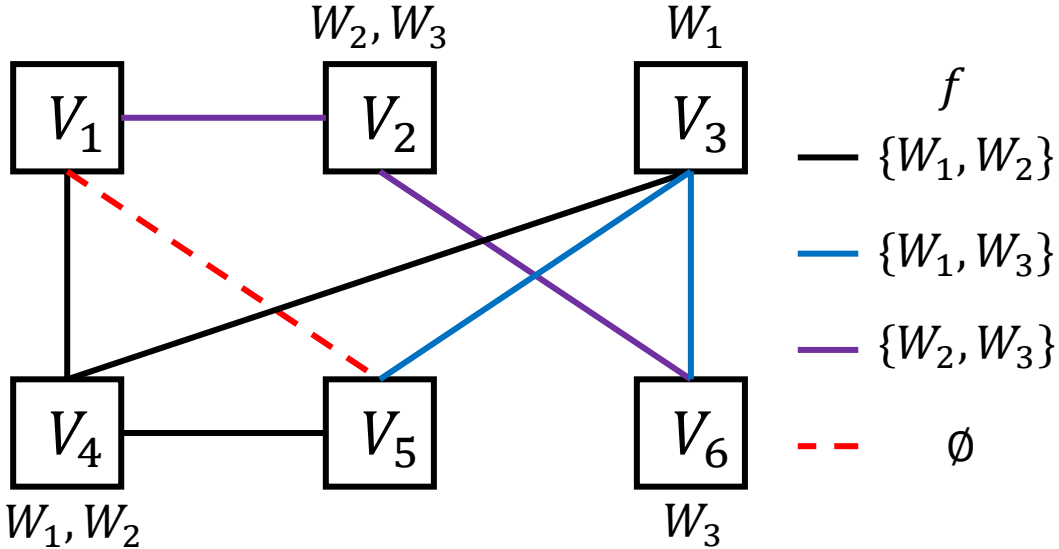}
    \caption{\small
An example graph $G$ with $K=3$, $N=6$, and $M=2$, together with a secure storage code construction without using any source key ($L_Z = 0$).
}
    \label{fig6}
\end{figure}

We first establish necessity. As a direct consequence of the storage constraint, each coded symbol must be a deterministic function of its common sources, as formalized in Lemma~\ref{lemma:det}. If the condition in Theorem~\ref{thm:2d} were violated, the desired source symbols alone would not provide sufficient information to fully populate the storage of an edge, making it impossible to satisfy the decoding requirement without a source key.

We next argue sufficiency. When the condition in Theorem~\ref{thm:2d} holds, random linear coding guarantees that storing a sufficient number of generic linear combinations of the common sources enables reliable recovery of the desired source symbols. An illustrative example is provided in Fig.~\ref{fig6}. The complete proof of Theorem~\ref{thm:2d} is deferred to Section~\ref{sec:thm3}.

\section{Proof of Theorem~\ref{thm:d1}}\label{sec:thm1}
Theorem~\ref{thm:d1} can be viewed as a special case of Theorem~\ref{thm:d2}. 
Consequently, its proof follows from that of Theorem~\ref{thm:d2}, which is provided in Section~\ref{sec:thm2}. 
Nevertheless, we include a separate proof here to explicitly present the code construction for the \emph{if} part. 
This is possible because when $M=1$, an explicit construction can be given, whereas for the general case of Theorem~\ref{thm:d2} with arbitrary $M$, the proof relies on a randomized construction.

\subsection{Proof of the ``If'' Part}\label{sec:thm11}
We show that if the graph $G$ satisfies the conditions stated in Theorem~\ref{thm:d1}, then a secure storage code achieving source key rate $R_Z=1$ exists. 
Assume that each source symbol $W_k$ has length $L$ bits. 
We construct the code over a finite field $\mathbb{F}_q$, where each source symbol is represented by a single field element; the field size $q$ will be specified later in the proof. 
The source key is also chosen as one symbol from $\mathbb{F}_q$, and therefore has length $L_Z=L$ bits. 
It follows that the achieved source key rate is $R_Z=L/L_Z=1$.

A degenerate node $V\in\mathcal{V}_d$, for which all incident edges are associated with the same source symbol $W_i$, is trivial, since its constraints can be satisfied by simply storing $W_i$ at $V$. 
If all incident edges of $V$ are unqualified, we instead assign an independent noise variable to $V$. 
Therefore, without loss of generality, we restrict our attention to the non-degenerate subgraph $\widetilde{G}$ induced by the node set $\mathcal{V}\setminus\mathcal{V}_d$.

For each source symbol $W_k$, $k\in[K]$, we construct the coded symbols by considering its characteristic graph $G^{[k]}$ separately. 
The final coded symbol assignment for $G$ is then obtained by combining the assignments corresponding to all $k\in[K]$.

We construct a secure storage code by encoding each source separately and then aggregating the encodings across all sources.

\textbf{Encoding on $G^{[k]}$:}
Fix an index $k\in[K]$ and consider the corresponding characteristic graph $G^{[k]}$, which consists of $U^{[k]}$ unqualified components. Choose a prime field size $q$ satisfying $q > \max_{k\in[K]} U^{[k]}$.

For each unqualified component indexed by $u\in[U^{[k]}]$, all nodes within this component are assigned an identical coded symbol. Specifically, for any node $V^{[k]}$ belonging to the $u$-th unqualified component of $G^{[k]}$, we define
\begin{eqnarray}
V^{[k]} = u W_k + Z, \label{eq:c1_new}
\end{eqnarray}
where $Z$ denotes an i.i.d. uniform random variable over $\mathbb{F}_q$, independent of all source symbols. The same source key $Z$ is reused for all $k\in[K]$.

Since the condition of Theorem~\ref{thm:d1} guarantees that $G^{[k]}$ contains no internal qualified edge, the assignment in \eqref{eq:c1_new} has the following properties.
\begin{itemize}
\item For any qualified edge $\{V_i^{[k]}, V_j^{[k]}\}$ in $G^{[k]}$, the two nodes belong to different unqualified components and thus have distinct coefficients of $W_k$. Hence,
\begin{eqnarray}
V_i^{[k]} - V_j^{[k]} = (u_i - u_j) W_k,
\label{eq:p1_new}
\end{eqnarray}
from which $W_k$ can be recovered.
\item For any unqualified edge $\{V_i^{[k]}, V_j^{[k]}\}$ in $G^{[k]}$, both nodes lie in the same unqualified component and therefore
\begin{eqnarray}
V_i^{[k]} = V_j^{[k]}.
\label{eq:p2_new}
\end{eqnarray}
\end{itemize}

{\bf Aggregation over $G$:}
There is a one-to-one correspondence between nodes in $G$ and nodes in each $G^{[k]}$. For each node $V$ in $G$, define
\begin{eqnarray}
V = \sum_{k\in[K]} V^{[k]}.
\label{eq:c2_new}
\end{eqnarray}

The source key rate is given by $R_Z=L/L_Z=1$, as desired.



{\bf Poof of Correctness:}
Edges incident to degenerate nodes are trivial, so we only consider non-degenerate edges. Let $\{V_i, V_j\}$ be a qualified edge in $G$ with $f(\{V_i,V_j\}) = \{W_l\}$. Then
\begin{eqnarray}
    &&V_i - V_j \notag\\
&\overset{(\ref{eq:c2_new})}{=}& \sum_{k\in[K]} \bigl( V_i^{[k]} - V_j^{[k]} \bigr) \\
&\overset{(\ref{eq:c1_new})(\ref{eq:p2_new})}{=}& V_i^{[l]} - V_j^{[l]}\\
&\overset{(\ref{eq:c1_new})(\ref{eq:p1_new})}{=}&(u_i-u_j)W_l,
\end{eqnarray}
The second equality follows from the fact that, for $k = l$, the edge $\{V_i^{[k]}, V_j^{[k]}\}$ is a qualified edge and hence contributes $V_i^{[l]} - V_j^{[l]}$ according to \eqref{eq:p1_new}, whereas for all $k \in [K]\setminus\{l\}$, the corresponding edges are unqualified and thus vanish by \eqref{eq:p2_new}.

Since $\{V_i^{[l]}, V_j^{[l]}\}$ is a qualified edge in $G^{[l]}$, the source symbol $W_l$ can be uniquely recovered from $V_i^{[l]} - V_j^{[l]}$ using \eqref{eq:p1_new}. Therefore, the correctness condition is satisfied.

{\bf Proof of Security:}
The pair $(V_i, V_j)$ is equivalent to $(V_i - V_j, V_i)$. As shown above, $V_i - V_j$ reveals exactly $W_l$. Moreover, from \eqref{eq:c1_new} and \eqref{eq:c2_new}, the symbol $V_i$ is masked by independent uniform noise variables $\{Z_k\}_{k\in[K]}$ and hence reveals no information about any source symbols other than $W_l$. Therefore, the security constraint holds.

For any unqualified edge $\{V_i, V_j\}$, we have $V_i^{[k]} = V_j^{[k]}$ for all $k$, implying $V_i = V_j$. Since this common value is independent of all source symbols, security is trivially satisfied.

This completes the proof.

\section{Proof of Theorem \ref{thm:d2}} \label{sec:thm2}
This section provides the proof of Theorem~\ref{thm:d2}. We first prove the “only if” part in Section~\ref{sec:thm21}, and then prove the “if” part in Section~\ref{sec:thm22}.

\subsection{Proof of the ``Only If'' Part}\label{sec:thm21}

We begin with a useful lemma that holds for any source key rate and any graph. This lemma will also be used in the proof of Theorem~\ref{thm:2d}.

\begin{lemma}[Independence of Non-Common Sources]\label{lemma:ind}
A coded symbol $V$ is independent of its non-common source symbols, both with and without conditioning on the common source symbols, i.e.,
\begin{eqnarray}
I\!\left(V; \{ W_k \}_{k\in [K]\setminus \mathcal{C}(V)} \mid \{W_k\}_{k\in\mathcal{C}(V)} \right) &=& 0, \label{eq:ind}\\
I\!\left(V; \{ W_k \}_{k\in [K]\setminus \mathcal{C}(V)} \right) &=& 0. \label{eq:ind0}
\end{eqnarray}
\end{lemma}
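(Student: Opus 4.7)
The plan is to fix a node $V$, let $A=\mathcal{C}(V)$ and $B=[K]\setminus A$, and establish (\ref{eq:ind}) by translating the per-edge security constraints (\ref{sec}) into a single functional statement about the conditional pmf $p(v\mid \{w_k\}_{k\in[K]})$. Then (\ref{eq:ind0}) will follow easily from (\ref{eq:ind}) together with the mutual independence (\ref{ind}) of the source symbols.

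First, I would apply (\ref{sec}) on every edge $e=\{V,V_i\}$ incident to $V$ and marginalize out $V_i$ via data processing, obtaining $I\bigl(V;\,\{W_k\}_{k\in[K]\setminus\mathcal{M}_e}\,\big|\,\{W_k\}_{k\in\mathcal{M}_e}\bigr)=0$ for every such edge, where $\mathcal{M}_e=f(e)$. Because the $W_k$'s are mutually independent and uniform on $\mathbb{F}_q$, this conditional independence is equivalent to the functional statement that $p(v\mid \{w_k\}_{k\in[K]})$ does not depend on the coordinates indexed by $[K]\setminus\mathcal{M}_e$.

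Second, I would intersect these invariance statements across all edges incident to $V$: the pmf $p(v\mid \{w_k\}_{k\in[K]})$ must be constant in every coordinate $w_k$ with $k\in\bigcup_{e\ni V}([K]\setminus\mathcal{M}_e)=[K]\setminus\bigcap_{e\ni V}\mathcal{M}_e=B$. Hence it is a function of $\{w_k\}_{k\in A}$ alone, which is exactly (\ref{eq:ind}). For (\ref{eq:ind0}) I would then use (\ref{eq:ind}) together with $\{W_k\}_{k\in A}\perp\{W_k\}_{k\in B}$ (inherited from (\ref{ind})) via a one-line marginalization giving $p(v,\{w_k\}_{k\in B})=\sum_{\{w_k\}_{k\in A}} p(v\mid \{w_k\}_{k\in A})p(\{w_k\}_{k\in A})p(\{w_k\}_{k\in B})=p(v)\,p(\{w_k\}_{k\in B})$, hence $V\perp\{W_k\}_{k\in B}$.

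The main subtlety lies in the first step: passing from the conditional independence $V\perp \{W_k\}_{k\in[K]\setminus\mathcal{M}_e}\mid \{W_k\}_{k\in\mathcal{M}_e}$ to the pointwise invariance of $p(v\mid \{w_k\}_{k\in[K]})$ under $\{w_k\}_{k\in[K]\setminus\mathcal{M}_e}$ relies on the joint distribution of the source symbols having full product support on $\mathbb{F}_q^K$, which follows directly from (\ref{ind})--(\ref{h1}); without this one would only obtain an almost-everywhere statement that could a priori fail on a null set of $w$'s. In the present discrete uniform setting the equivalence is immediate, after which the remaining steps are routine set-theoretic and marginalization arguments.
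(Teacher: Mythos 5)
Your proposal is correct, but it takes a genuinely different route from the paper. The paper stays entirely within entropy calculus: for each non-common index $i$ it picks one incident edge whose label excludes $W_i$, chains the security constraint (\ref{sec}) down to $I\bigl(V;W_i\mid\{W_k\}_{k\in\mathcal{C}(V)}\bigr)=0$ using the mutual independence of the sources, and then assembles these per-source statements into the joint statement (\ref{eq:ind}) via the chain rule. You instead convert each per-edge constraint into the pointwise invariance of $p\bigl(v\mid\{w_k\}_{k\in[K]}\bigr)$ in the coordinates outside $\mathcal{M}_e$ (valid here because the sources have full product support), and intersect these invariances over all incident edges to conclude that the conditional pmf depends only on $\{w_k\}_{k\in\mathcal{C}(V)}$. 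Your route buys robustness at exactly the delicate point: passing from ``$V$ is conditionally independent of each non-common $W_i$ separately'' to ``$V$ is conditionally independent of all of them jointly'' is not automatic, and the paper's final chain-rule step invokes the bound $I\bigl(V;W_i\mid\{W_j\}_{j<i},\{W_k\}_{k\in\mathcal{C}(V)}\bigr)\le I\bigl(V;W_i\mid\{W_k\}_{k\in\mathcal{C}(V)}\bigr)$, which as stated points in the wrong direction (extra conditioning on independent sources can only increase the mutual information with $V$), so that step needs repair. Your functional-invariance argument sidesteps this ``pairwise versus joint'' pitfall entirely, and you correctly flag the full-support condition as the only hypothesis it relies on; the paper's approach, in exchange, avoids any explicit manipulation of pmfs and stays in the language of Shannon inequalities used throughout the rest of the converse proofs.
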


{\it Proof:}
We first prove~\eqref{eq:ind}. Consider any non-common source symbol $W_i$ of node $V$, that is, $i \in [K]\setminus \mathcal{C}(V)$, 
Since $W_i$ is not a common source symbol of $V$, there exists an edge $\{V, V_j\}$ such that
$W_i \notin f(\{V, V_j\}) = \{W_k\}_{k\in \mathcal{M}}$.
By the security constraint~\eqref{sec}, we have
\begin{align}
0 \overset{(\ref{sec})}{=}&
I\!\left(V, V_j; \{W_k\}_{k \in [K]\setminus\mathcal{M}} \mid \{W_k\}_{k \in \mathcal{M}} \right) \notag\\
\geq&
I\!\left(V; W_i \mid \{W_k\}_{k \in [K]\setminus\{i\}} \right) \label{eq:ind1}\\
=&
I\!\left(V, \{W_k\}_{k \in [K]\setminus(\{i\}\cup \mathcal{C}(V))}; W_i
\mid \{W_k\}_{k \in \mathcal{C}(V)} \right) \notag\\
&-\underbrace{I\!\left(\{W_k\}_{k \in [K]\setminus(\{i\}\cup \mathcal{C}(V))}; W_i
\mid  \{W_k\}_{k \in \mathcal{C}(V)} \right)}_{\overset{(\ref{ind})}{=}0} \label{eq:ind2}\\
\geq&
I\!\left(V; W_i \mid  \{W_k\}_{k \in \mathcal{C}(V)} \right).
\label{eq:ind3}
\end{align}
The second term in~\eqref{eq:ind2} is zero because the source symbols $\{W_k\}_{k\in[K]}$ are mutually independent.

We are now ready to establish~\eqref{eq:ind}:
\begin{align}
&I\!\left(V; \{ W_k \}_{k\in [K]\setminus \mathcal{C}(V)} \mid \{W_k\}_{k\in\mathcal{C}(V)} \right)\notag\\
=&
\sum_{i\in [K]\setminus \mathcal{C}(V)}
I\!\left(V; W_i \mid \{ W_j \}_{j\in [K]\setminus \mathcal{C}(V),j<i},\{W_k\}_{k\in\mathcal{C}(V)} \right) \notag\\
\leq&
\sum_{i\in [K]\setminus \mathcal{C}(V)}
I\!\left(V; W_i \mid \{W_k\}_{k\in\mathcal{C}(V)}\right)\\
\overset{(\ref{eq:ind3})}{=}& 0.
\end{align}

Next, we prove~\eqref{eq:ind0} as a direct consequence of~\eqref{eq:ind}. Specifically,
\begin{eqnarray}
0 &\overset{(\ref{eq:ind})}{=}&
I\!\left(V; \{ W_k \}_{k\in [K]\setminus \mathcal{C}(V)} \mid \{W_k\}_{k\in\mathcal{C}(V)} \right) \notag\\
&=&
I\!\left(V, \{W_k\}_{k\in\mathcal{C}(V)}; \{ W_k \}_{k\in [K]\setminus \mathcal{C}(V)} \right)\notag\\
&& -\underbrace{I\left( \left\{W_k\right\}_{k\in\mathcal{C}(V)}; \left\{ W_k \right\}_{k\in [K]\backslash \mathcal{C}(V)}  \right)}_{\overset{(\ref{ind})}{=}0}\label{eq:ind4}\\
&\geq&
I\!\left(V; \{ W_k \}_{k\in [K]\setminus \mathcal{C}(V)} \right),
\end{eqnarray}
where the second term of (\ref{eq:ind4}) is zero due to the mutual independence of the source symbols $\{W_k\}_{k\in [K]}$.
\hfill$\QED$

We now proceed to the proof of the ``only if'' part.
We show that the source key rate $R_Z = 1/M$ is not achievable if a graph $G$ does not satisfy the condition in Theorem~\ref{thm:d2}.
Specifically, suppose there exists a non-degenerate subgraph $\widetilde{G}$ of $G$ such that the characteristic graph $G^{[k]}$ of some source symbol $W_k$ contains an internal qualified edge.

Without loss of generality, let $k=1$, and suppose that the internal qualified edge is $\{V_1^{[1]}, V_P^{[1]}\}$, which lies within the sequence of unqualified edges $\bigl(\{V_1^{[1]}, V_2^{[1]}\}, \ldots, \{V_{P-1}^{[1]}, V_P^{[1]}\}\bigr).$
We establish the converse via contradiction. Assume that the rate $R_Z = \frac{1}{M} = \lim_{L\to\infty} \frac{L}{L_Z}$ is asymptotically achievable, i.e.,
$L_Z = ML + o(L)$.
Note that if the rate is exactly achievable, the $o(L)$ term vanishes, and the following argument still applies.

We first show that when $R_Z = 1/M$, every non-degenerate coded symbol with no common source symbols must be fully masked by noise.
This key property is formalized in the following lemma.

\begin{lemma}[Message and Noise Size]\label{lemma:size}
When $R_Z = 1/M$, i.e., $L_Z = ML + o(L)$, for a non-degenerate graph
$\widetilde{G} = (\mathcal{V}, \mathcal{E})$ such that every node $V \in \mathcal{V}$ satisfies $\mathcal{C}(V) = \emptyset$, we have
\begin{eqnarray}
H(V)
&=& H\!\left(V \mid \{W_k\}_{k \in [K]\setminus\{1\}}\right)
= H\!\left(V \mid \{W_k\}_{k \in [K]}\right) \notag\\
&=& H(\mathcal{Z})
= ML + o(L).
\label{eq:size}
\end{eqnarray}
\end{lemma}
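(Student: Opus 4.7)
The plan is to split the chain of equalities into two pieces: first collapse the three conditional-entropy terms to a single quantity via independence, then sandwich that quantity between $ML$ and $ML+o(L)$.

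First I would invoke Lemma~\ref{lemma:ind}. Since every node $V\in\mathcal{V}$ of $\widetilde{G}$ satisfies $\mathcal{C}(V)=\emptyset$, equation~(\ref{eq:ind0}) yields $I(V;\{W_k\}_{k\in[K]})=0$, so $H(V)=H(V\mid\{W_k\}_{k\in[K]\setminus\{1\}})=H(V\mid\{W_k\}_{k\in[K]})$. This disposes of the first two equalities and lets me work with $H(V\mid\{W_k\}_{k\in[K]})$ alone. The upper bound $H(V\mid\{W_k\}_{k\in[K]})\le ML+o(L)$ is immediate from (\ref{messagevn}): since $V$ is a deterministic function of $(\{W_k\}_{k\in[K]},\mathcal{Z})$, we have $H(V\mid\{W_k\}_{k\in[K]})\le H(\mathcal{Z})=L_Z=ML+o(L)$.

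The main step is the matching lower bound $H(V)\ge ML$. Here I would pick any qualified edge $\{V,V_j\}\in\mathcal{E}$ incident to $V$, which exists because $V$ is non-degenerate, and let $\mathcal{M}$ denote its label with $|\mathcal{M}|=M$. Correctness~(\ref{dec}) gives $H(\{W_k\}_{k\in\mathcal{M}}\mid V,V_j)=0$, while security~(\ref{sec}) together with (\ref{ind}) yields $H(\{W_k\}_{k\in[K]\setminus\mathcal{M}}\mid V,V_j,\{W_k\}_{k\in\mathcal{M}})=(K-M)L$. Combining these two entropy computations gives $I(V,V_j;\{W_k\}_{k\in[K]})=KL-(K-M)L=ML$. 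Expanding this mutual information by applying the chain rule to $V_j$ first and $V$ second, I obtain
\begin{equation*}
ML=[H(V_j)-H(V_j\mid\{W_k\}_{k\in[K]})]+[H(V\mid V_j)-H(V\mid V_j,\{W_k\}_{k\in[K]})].
\end{equation*}
Since $V_j$ also lies in $\mathcal{V}$ with $\mathcal{C}(V_j)=\emptyset$, a second application of Lemma~\ref{lemma:ind} annihilates the first bracket. Hence $H(V\mid V_j)\ge ML$, which forces $H(V)\ge H(V\mid V_j)\ge ML$. Combined with the upper bound, $H(V)=ML+o(L)$; together with $H(\mathcal{Z})=L_Z=ML+o(L)$ by hypothesis, the full equality chain is established.

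The main obstacle I anticipate is the careful bookkeeping in computing $I(V,V_j;\{W_k\}_{k\in[K]})$ from correctness and security simultaneously, and in choosing the chain-rule expansion that isolates the desired conditional entropy. The reduction crucially relies on the assumption that \emph{every} vertex in $\widetilde{G}$, not just $V$ itself, has empty common sources, since this is what makes the bracket involving $V_j$ vanish. Existence of a suitable qualified incident edge is not an issue: non-degeneracy of $V$ prevents all its incident edges from carrying the same label, which together with the assumption that all vertices in $\widetilde{G}$ have $\mathcal{C}=\emptyset$ guarantees that an edge of the required form is available.
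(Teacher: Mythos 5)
Your proposal is correct and follows essentially the same strategy as the paper: collapse the first three quantities via Lemma~\ref{lemma:ind} applied at $V$ (using $\mathcal{C}(V)=\emptyset$), obtain the upper bound $H(V)\le H(\mathcal{Z})=L_Z$ from the deterministic-encoding constraint~(\ref{messagevn}), and obtain the lower bound $H(V)\ge ML$ by passing through a qualified incident edge $\{V,V_j\}$ and invoking Lemma~\ref{lemma:ind} a second time at the neighbor $V_j$. The one place you diverge is in how you reach the lower bound: you compute the full mutual information $I(V,V_j;\{W_k\}_{k\in[K]})=ML$ by combining correctness \emph{and} security, then expand by the chain rule with $V_j$ first. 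The paper reaches the same conclusion more directly by writing $ML=H(\{W_k\}_{k\in\mathcal{M}})=I(V,V_j;\{W_k\}_{k\in\mathcal{M}})$ from correctness alone, expanding, and killing $I(V_j;\{W_k\}_{k\in\mathcal{M}})$ via~(\ref{eq:ind0}). Your route is valid but uses the security constraint where it is not needed; the paper's version is slightly leaner. Everything else — including your correct observation that non-degeneracy plus $\mathcal{C}(V)=\emptyset$ guarantees a qualified incident edge, and that the hypothesis on \emph{all} nodes of $\widetilde{G}$ (not just $V$) is essential for annihilating the $V_j$ term — matches the paper.
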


{\it Proof:}
Since $\mathcal{C}(V)=\emptyset$, by~\eqref{eq:ind0} we have
\begin{eqnarray}
I\!\left(V; \{W_k\}_{k\in[K]}\right) = 0,
\end{eqnarray}
which implies
\begin{eqnarray}
H(V) = H\!\left(V \mid \{W_k\}_{k\in[K]}\right).
\label{eq:lemma2tt1}
\end{eqnarray}
Moreover, conditioning reduces entropy, yielding
\begin{eqnarray}
H(V)
\ge H\!\left(V \mid \{W_k\}_{k\in[K]\setminus\{1\}}\right)
\ge H\!\left(V \mid \{W_k\}_{k\in[K]}\right).
\label{eq:lemma2tt2}
\end{eqnarray}
Combining~\eqref{eq:lemma2tt1} and~\eqref{eq:lemma2tt2}, we obtain
\begin{eqnarray}
H(V)
= H\!\left(V \mid \{W_k\}_{k\in[K]\setminus\{1\}}\right)
= H\!\left(V \mid \{W_k\}_{k\in[K]}\right).
\label{eq:lemma2tt3}
\end{eqnarray}

Next, we show that $ML \le H(V) \le H(\mathcal{Z}) = ML + o(L)$, which establishes the desired result.
Since $V$ is non-degenerate, there exists a qualified edge $\{V,V_i\}\in\mathcal{E}$ such that
$f(\{V,V_i\})=\{W_k\}_{k\in\mathcal{M}}$ with $|\mathcal{M}|=M$.
By the correctness constraint~\eqref{dec},
\begin{eqnarray}
ML
&\overset{(\ref{h1})}{=}&
H\!\left(\{W_k\}_{k\in\mathcal{M}}\right) \notag\\
&=&
I\!\left(V,V_i; \{W_k\}_{k\in\mathcal{M}}\right)
+ \underbrace{H\!\left(\{W_k\}_{k\in\mathcal{M}} \mid V,V_i\right)}_{\overset{(\ref{dec})}{=}0} \notag\\
&=&
I\!\left(V,V_i; \{W_k\}_{k\in\mathcal{M}}\right) \notag\\
&=&I\!\left(V_i; \{W_k\}_{k\in\mathcal{M}}\right)+I\!\left(V; \{W_k\}_{k\in\mathcal{M}} \mid V_i\right)~~\label{eq:lemma1ttt4}\\
&\overset{(\ref{eq:ind0})}{=}&
I\!\left(V; \{W_k\}_{k\in\mathcal{M}} \mid V_i\right)\le H(V),
\label{eq:lemma1tt4}
\end{eqnarray}
where in (\ref{eq:lemma1ttt4}), we use the fact that $V_i \in \mathcal{V}\backslash\mathcal{V}_d$ such that $\mathcal{C}(V_i) = \emptyset$ and from Lemma \ref{lemma:ind}, $V_i$ is independent of the source symbols $\{W_k\}_{k\in \mathcal{M}}$.
Furthermore,
\begin{eqnarray}
&&H(V)\\
&=&H\left(V|\left\{W_k \right\}_{k\in [K]}\right)+\underbrace{I\left(V; \left\{ W_k \right\}_{k\in [K]} \right)}_{ \overset{(\ref{eq:ind0})}{=}0}\\
&=&\underbrace{H\left(V|\left\{W_k \right\}_{k\in [K]},\mathcal{Z}\right)}_{ \overset{(\ref{messagevn})}{=}0}+I\left(V;\mathcal{Z}|\left\{W_k \right\}_{k\in [K]}\right)~~~\\
&\leq &H(\mathcal{Z})\label{eq:lemma1tt5}\\
&\overset{(\ref{h1})}{=}&L_Z=ML+o(L),\label{eq:lemma1tt6}
\end{eqnarray}
 
Combining the above inequalities yields
\begin{eqnarray}
ML \le H(V) \le H(\mathcal{Z}) = ML + o(L).
\label{eq:lemma2tt7}
\end{eqnarray}
Together with~\eqref{eq:lemma2tt3}, this completes the proof of Lemma~\ref{lemma:size}.

\hfill$\QED$

Next, we show that when $R_Z=1/M$, the two coded symbols
associated with a qualified edge must carry asymptotically independent
information.

\begin{lemma}[Message Size of a Qualified Edge]\label{lemma:qualified}
When $R_Z = 1/M$, for any qualified edge $\{V_i, V_j\}$ associated with source
symbols $\{W_k\}_{k\in \mathcal{M}}$, we have
\begin{eqnarray}
H\!\left(V_i, V_j\right) = 2ML + o(L).
\label{eq:qualified}
\end{eqnarray}
\end{lemma}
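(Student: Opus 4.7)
The plan is to sandwich $H(V_i,V_j)$ between $2ML$ from below and $2ML+o(L)$ from above, using Lemma~\ref{lemma:size} for the upper bound and combining correctness with Lemma~\ref{lemma:ind} for the lower bound.

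The upper bound is immediate: by subadditivity $H(V_i,V_j)\le H(V_i)+H(V_j)$, and since $V_i,V_j\in\mathcal{V}\setminus\mathcal{V}_d$ both have empty common source sets, Lemma~\ref{lemma:size} gives $H(V_i)=H(V_j)=ML+o(L)$, hence $H(V_i,V_j)\le 2ML+o(L)$.

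For the lower bound I would apply the chain rule $H(V_i,V_j)=H(V_i)+H(V_j\mid V_i)$. The first term is $ML+o(L)$ by Lemma~\ref{lemma:size}. For the second term, correctness~(\ref{dec}) gives $H(\{W_k\}_{k\in\mathcal{M}}\mid V_i,V_j)=0$, so
\begin{align}
H(V_j\mid V_i)
&= H(V_j,\{W_k\}_{k\in\mathcal{M}}\mid V_i)\notag\\
&\ge H(\{W_k\}_{k\in\mathcal{M}}\mid V_i).
\end{align}
Because $\mathcal{C}(V_i)=\emptyset$, Lemma~\ref{lemma:ind} yields $I(V_i;\{W_k\}_{k\in[K]})=0$, so $V_i$ is independent of $\{W_k\}_{k\in\mathcal{M}}$ and $H(\{W_k\}_{k\in\mathcal{M}}\mid V_i)=H(\{W_k\}_{k\in\mathcal{M}})=ML$. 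Combining these, $H(V_i,V_j)\ge ML+o(L)+ML=2ML+o(L)$, which matches the upper bound up to $o(L)$ terms and proves the claim.

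The only subtlety I anticipate is being careful with the two appeals to the hypothesis $\mathcal{C}(V)=\emptyset$: once for Lemma~\ref{lemma:size} applied to each endpoint (so that $H(V_i),H(V_j)\le ML+o(L)$), and once via Lemma~\ref{lemma:ind} on $V_i$ alone to ensure that conditioning on $V_i$ does not reduce the entropy of the desired source symbols. Both ingredients rest on the standing assumption in Theorem~\ref{thm:d2} that every non-degenerate node has an empty common source set, so nothing extra is required.
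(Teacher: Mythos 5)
Your proposal is correct and follows essentially the same route as the paper: the upper bound is identical (subadditivity plus Lemma~\ref{lemma:size}), and the lower bound uses the same two ingredients --- correctness to inject $H(\{W_k\}_{k\in\mathcal{M}})=ML$ and the emptiness of $\mathcal{C}(V_i)$ to extract a further $ML$ of entropy from $V_i$ --- with only a cosmetic difference in the order of the chain-rule expansion (the paper conditions on $\{W_k\}_{k\in\mathcal{M}}$ first and then invokes $H(V_i\mid\{W_k\}_{k\in[K]})=ML+o(L)$, whereas you condition on $V_i$ first and invoke $I(V_i;\{W_k\}_{k\in[K]})=0$). Your closing remark about where the hypothesis $\mathcal{C}(V)=\emptyset$ enters is accurate and matches the paper's implicit reliance on the standing assumption of Theorem~\ref{thm:d2}.
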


{\it Proof:}
We first establish an upper bound. By subadditivity of entropy,
\begin{eqnarray}
H\!\left(V_i, V_j\right)
\le H(V_i) + H(V_j)
\overset{(\ref{eq:size})}{=} 2ML + o(L).
\end{eqnarray}

To obtain a matching lower bound, note that by the correctness constraint~\eqref{dec},
the source symbols $\{W_k\}_{k\in\mathcal{M}}$ are functions of $(V_i, V_j)$, and hence
\begin{eqnarray}
H\!\left(V_i, V_j\right)
&\overset{(\ref{dec})}{=}&
H\!\left(V_i, V_j, \{W_k\}_{k\in\mathcal{M}}\right) \notag\\
&=&
H\!\left(\{W_k\}_{k\in\mathcal{M}}\right)
+ H\!\left(V_i, V_j \mid \{W_k\}_{k\in\mathcal{M}}\right) \notag\\
&\overset{(\ref{h1})}{\geq}&
ML + H\!\left(V_i \mid \{W_k\}_{k\in[K]}\right)\\
&\overset{(\ref{eq:size})}{=}& 2ML + o(L).
\end{eqnarray}
Combining the upper and lower bounds completes the proof.

\hfill$\QED$



Next, we show that when $R_Z = 1/M$, all nodes in a non-degenerate subgraph
$\widetilde{G}$ of $G$ must share the same noise, i.e., the noise must align.
This property is formalized in the following lemma.

\begin{lemma}[Noise Alignment for All Nodes] \label{lemma:noise}
When $R_Z = 1/M$, for a non-degenerate graph $\widetilde{G} = (\mathcal{V}, \mathcal{E})$ such that every node $V \in \mathcal{V}$ satisfies $\mathcal{C}(V) = \emptyset$, we have 
\begin{eqnarray}
H\left(\mathcal{V} \mid \left\{W_k\right\}_{k \in [K]} \right) = ML+o(L). \label{eq:noise0}
\end{eqnarray}
\end{lemma}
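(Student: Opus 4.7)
The plan is to sandwich $H(\mathcal{V}\mid\{W_k\}_{k\in[K]})$ between $ML+o(L)$ from both sides, using nothing beyond Lemma~\ref{lemma:size} and the standing hypotheses $L_Z=ML+o(L)$ and $\mathcal{C}(V)=\emptyset$ for all $V\in\mathcal{V}$. The resulting identity formalizes noise alignment: the total conditional randomness across all nodes is no larger than the key budget, which in turn is no larger than the randomness already present at any single node.

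For the upper bound, I would exploit that, by~\eqref{messagevn}, $\mathcal{V}$ is a deterministic function of $\{W_k\}_{k\in[K]}$ and $\mathcal{Z}$, together with the independence of $\mathcal{Z}$ from the source symbols in~\eqref{ind}. Specifically,
\begin{align}
H\!\left(\mathcal{V}\mid\{W_k\}_{k\in[K]}\right)
&\le H\!\left(\mathcal{V},\mathcal{Z}\mid\{W_k\}_{k\in[K]}\right) \\
&= H\!\left(\mathcal{Z}\mid\{W_k\}_{k\in[K]}\right) + \underbrace{H\!\left(\mathcal{V}\mid\mathcal{Z},\{W_k\}_{k\in[K]}\right)}_{=0} \\
&= H(\mathcal{Z}) = L_Z = ML+o(L).
\end{align}

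For the matching lower bound, I would simply drop all but one coded symbol. Since $\widetilde{G}$ is nonempty, I pick any $V\in\mathcal{V}$, for which $\mathcal{C}(V)=\emptyset$ by hypothesis, and invoke Lemma~\ref{lemma:size} to obtain
\begin{equation}
H\!\left(\mathcal{V}\mid\{W_k\}_{k\in[K]}\right)
\ge H\!\left(V\mid\{W_k\}_{k\in[K]}\right)
= ML+o(L).
\end{equation}
Combining the two bounds yields~\eqref{eq:noise0}.

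I do not expect a technical obstacle here; the argument is essentially a two-line entropy sandwich. The conceptual content is that Lemma~\ref{lemma:size} has already done the heavy lifting at a single node, forcing each $V$ to saturate the key budget $H(\mathcal{Z})$, so there is no room left for distinct noise at different nodes. The only point requiring care is to confirm that the hypotheses of Lemma~\ref{lemma:size} propagate verbatim, which they do since the present lemma assumes the same $\mathcal{C}(V)=\emptyset$ condition for every $V\in\mathcal{V}$.
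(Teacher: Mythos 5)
Your proof is correct and follows essentially the same route as the paper: the upper bound via $H(\mathcal{V}\mid\{W_k\}_{k\in[K]})\le H(\mathcal{Z})=L_Z$ using~\eqref{messagevn}, and the lower bound by restricting to a single node and invoking Lemma~\ref{lemma:size}. The only cosmetic difference is that you expand $H(\mathcal{V},\mathcal{Z}\mid\{W_k\}_{k\in[K]})$ by the chain rule while the paper writes the equivalent decomposition through $I(\mathcal{V};\mathcal{Z}\mid\{W_k\}_{k\in[K]})$.
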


{\it Proof:} On the one hand, we have
\begin{eqnarray}
    &&H\left(\mathcal{V} \mid \left\{W_k\right\}_{k \in [K]} \right)\notag\\
    &=&\underbrace{H\left(\mathcal{V}|\left\{W_k \right\}_{k\in [K]},\mathcal{Z}\right)}_{ \overset{(\ref{messagevn})}{=}0}+I\left(\mathcal{V};\mathcal{Z} \mid \left\{W_k\right\}_{k \in [K]} \right)~~\\
    &\leq &H\left(\mathcal{Z} \mid \left\{W_k\right\}_{k \in [K]} \right)\\
    &\leq &H(\mathcal{Z})=L_Z=ML+o(L).
\end{eqnarray}

On the other hand, for any $V \in \mathcal{V}$,
\begin{eqnarray}
&&H\left(\mathcal{V} \mid \left\{W_k\right\}_{k \in [K]} \right)\notag\\
&\geq& H\left(V \mid \left\{W_k\right\}_{k \in [K]} \right)\overset{(\ref{eq:size})}{=}ML+o(L).
\end{eqnarray}
The proof is now complete.

\hfill\QED

Consider the nodes $V_1,\ldots,V_P$ that violate the condition in
Theorem~\ref{thm:d2}. Specifically, for each pair
$(V_p,V_{p+1})$, $p\in[P-1]$, no information about $W_1$ can be decoded,
whereas $W_1$ is decodable from $(V_1,V_P)$.
The following lemma shows that the coded symbols
$V_1,\ldots,V_P$ must carry aligned information about $W_1$ and the noise.

\begin{lemma}[Coded Symbol Alignment]\label{lemma:signal}
When $R_Z = 1/M$, for the nodes $V_1,\ldots,V_P$ specified above, we have
\begin{align}
H\!\left(V_p, V_{p+1} \mid \{W_k\}_{k\in[K]\setminus\{1\}}\right)
&= ML + o(L), ~ \forall p\in[P-1], \label{eq:signal}\\
H\!\left(V_1, V_P \mid \{W_k\}_{k\in[K]\setminus\{1\}}\right)
&= ML + o(L). \label{eq:signal0}
\end{align}
\end{lemma}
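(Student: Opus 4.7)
The plan is to prove the two identities in sequence: (\ref{eq:signal}) will follow from a direct entropy calculation at each unqualified-for-$W_1$ edge using security, correctness, and the noise alignment of Lemma~\ref{lemma:noise}, and (\ref{eq:signal0}) will then follow by propagating the resulting edge-level alignment along the entire path from $V_1$ to $V_P$ via the chain rule.

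For (\ref{eq:signal}), fix an edge $\{V_p,V_{p+1}\}$ along the path and let $\mathcal{M}_p\subseteq [K]\setminus\{1\}$ denote its associated source set (either empty or of size $M$). The lower bound is immediate from Lemma~\ref{lemma:size}: $H(V_p,V_{p+1}\mid \{W_k\}_{k\in[K]\setminus\{1\}}) \geq H(V_p\mid \{W_k\}_{k\in[K]\setminus\{1\}}) = ML + o(L)$. For the matching upper bound, I would first use security (\ref{sec}) to shrink the conditioning set: because $1\notin\mathcal{M}_p$, the pair $(V_p,V_{p+1})$ is conditionally independent of $\{W_k\}_{k\in[K]\setminus\mathcal{M}_p}$ given $\{W_k\}_{k\in\mathcal{M}_p}$, so $H(V_p,V_{p+1}\mid \{W_k\}_{k\in[K]\setminus\{1\}}) = H(V_p,V_{p+1}\mid \{W_k\}_{k\in\mathcal{M}_p}) = H(V_p,V_{p+1}) - |\mathcal{M}_p|L$, where the second equality uses correctness (\ref{dec}). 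Decomposing $H(V_p,V_{p+1}) = I((V_p,V_{p+1});\{W_k\}_{k\in[K]}) + H(V_p,V_{p+1}\mid \{W_k\}_{k\in[K]})$, the mutual information equals $|\mathcal{M}_p|L$ (again from the same security--correctness pair), while the conditional entropy is bounded by $H(\mathcal{Z}) = ML+o(L)$ through Lemma~\ref{lemma:noise}. Combining these pieces gives the upper bound $ML + o(L)$ uniformly for $\mathcal{M}_p=\emptyset$ and $|\mathcal{M}_p|=M$.

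For (\ref{eq:signal0}), combining (\ref{eq:signal}) with Lemma~\ref{lemma:size} yields $H(V_{p+1}\mid V_p,\{W_k\}_{k\in[K]\setminus\{1\}}) = o(L)$ for every $p\in[P-1]$, meaning that along the unqualified path each node asymptotically determines the next modulo the remaining sources. Iterating via the chain rule, $H(V_P\mid V_1,\{W_k\}_{k\in[K]\setminus\{1\}}) \leq \sum_{p=1}^{P-1} H(V_{p+1}\mid V_p,\{W_k\}_{k\in[K]\setminus\{1\}}) = (P-1)\cdot o(L) = o(L)$, where $P$ is fixed by the graph and does not grow with $L$. Then $H(V_1,V_P\mid \{W_k\}_{k\in[K]\setminus\{1\}}) = H(V_1\mid \{W_k\}_{k\in[K]\setminus\{1\}}) + H(V_P\mid V_1,\{W_k\}_{k\in[K]\setminus\{1\}}) = ML + o(L)$, establishing (\ref{eq:signal0}).

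The main technical obstacle is the unified upper bound for (\ref{eq:signal}), which must simultaneously handle edges unqualified in $G$ and edges qualified in $G$ but not for $W_1$. Identifying $\mathcal{M}_p$ as the right conditioning set, so that security collapses all dependence on $W_1$ and on $\{W_k\}_{k\in[K]\setminus\mathcal{M}_p\setminus\{1\}}$, together with the evaluation $I((V_p,V_{p+1});\{W_k\}_{k\in[K]}) = |\mathcal{M}_p|L$, is what makes a uniform argument possible. Once (\ref{eq:signal}) is in hand, the passage to (\ref{eq:signal0}) is essentially a telescoping chain-rule bound whose only non-trivial input is the $L$-independence of $P$.
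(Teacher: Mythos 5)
Your proposal is correct, and while it follows the same overall architecture as the paper (lower bound from Lemma~\ref{lemma:size}, upper bound for each path edge via security plus the noise-alignment bound of Lemma~\ref{lemma:noise}, then propagation along the path), it streamlines two steps in ways worth noting. First, for the upper bound in (\ref{eq:signal}) the paper splits into two cases --- unqualified edges versus edges qualified for some $\mathcal{M}_p$ with $1\notin\mathcal{M}_p$ --- and in the second case invokes Lemma~\ref{lemma:qualified} to pin down $H(V_p,V_{p+1})=2ML+o(L)$ before subtracting the $ML$ revealed by correctness. Your unified argument observes that security gives $H(V_p,V_{p+1}\mid \{W_k\}_{k\in[K]\setminus\{1\}})=H(V_p,V_{p+1}\mid \{W_k\}_{k\in\mathcal{M}_p})=H(V_p,V_{p+1}\mid \{W_k\}_{k\in[K]})$ (a sandwich between the two extremes of the conditioning), after which the $|\mathcal{M}_p|L$ terms cancel and the bound $H(V_p,V_{p+1}\mid\{W_k\}_{k\in[K]})\le H(\mathcal{V}\mid\{W_k\}_{k\in[K]})=ML+o(L)$ handles both cases at once; this avoids Lemma~\ref{lemma:qualified} entirely and is arguably cleaner. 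Second, for (\ref{eq:signal0}) the paper uses a submodularity inequality summed over the path, whereas you convert each edge identity into $H(V_{p+1}\mid V_p,\{W_k\}_{k\in[K]\setminus\{1\}})=o(L)$ and telescope via the chain rule; the two are equivalent, and your version makes the role of the fixed path length $P$ (independent of $L$) explicit. No gaps: the implicit reliance on $\mathcal{C}(V_p)=\emptyset$ for the nodes on the path, needed to invoke Lemmas~\ref{lemma:size} and~\ref{lemma:noise}, is part of the standing hypothesis of Theorem~\ref{thm:d2} and is used identically in the paper.
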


{\it Proof:}
We first prove~\eqref{eq:signal}.
The lower bound follows immediately from
\begin{eqnarray}
&&H\!\left(V_p, V_{p+1} \mid \{W_k\}_{k\in[K]\setminus\{1\}}\right)\notag\\
&\ge&
H\!\left(V_p \mid \{W_k\}_{k\in[K]\setminus\{1\}}\right)
\overset{(\ref{eq:size})}{=} ML + o(L). \label{eq:lm5s1}
\end{eqnarray}

We next establish the upper bound.
Since $(V_p,V_{p+1})$ does not allow decoding of $W_1$, the edge
$\{V_p,V_{p+1}\}$ is either unqualified, or qualified with
$W_1\notin f(\{V_p,V_{p+1}\})=\{W_k\}_{k\in \mathcal{M}}$.

\emph{Case 1:} $\{V_p,V_{p+1}\}$ is an unqualified edge, and $\mathcal{M=\emptyset}$.
By the security constraint~\eqref{sec},
\begin{eqnarray}
&& H\left(V_{p}, V_{p+1} \mid \left\{W_k\right\}_{k\in[K]\backslash\{1\}} \right) \notag\\
&=&  H\left(V_{p}, V_{p+1} \mid \left\{W_k\right\}_{k\in[K]} \right)\notag\\
&&+I\left(V_{p}, V_{p+1};W_1 \mid \left\{W_k\right\}_{k\in[K]\backslash\{1\}} \right) \\
&=&  H\left(V_{p}, V_{p+1} \mid \left\{W_k\right\}_{k\in[K]} \right)\notag\\
&&+\underbrace{I\left(V_{p}, V_{p+1}; \left\{W_k\right\}_{k\in[K]} \right)}_{ \overset{(\ref{sec})}{=}0} \notag\\
&&-\underbrace{I\left(V_{p}, V_{p+1}; \left\{W_k\right\}_{k\in[K]\backslash\{1\}} \right)}_{ \overset{}{=}0} \label{eq:lm5tttt1}\\
 &\leq&  H\left( \mathcal{V}\mid \left\{W_k\right\}_{k\in[K]} \right) \label{eq:s2}\\
 &\overset{(\ref{eq:noise0})}{=}& ML+o(L),
\end{eqnarray}
where the third term of \eqref{eq:lm5tttt1} is zero since
$0 \le I(V_{p}, V_{p+1}; \{W_k\}_{k\in[K]\setminus\{1\}})
\le I(V_{p}, V_{p+1}; \{W_k\}_{k\in[K]})
\overset{(\ref{sec})}{=} 0$,
which implies
$I(V_{p}, V_{p+1}; \{W_k\}_{k\in[K]\setminus\{1\}})=0$.

\emph{Case 2:} $\{V_p,V_{p+1}\}$ is a qualified edge with
$f(\{V_p,V_{p+1}\})$ $=$ $\{W_k\}_{k\in\mathcal{M}}$ and $W_1\notin\{W_k\}_{k\in\mathcal{M}}$.
By correctness~\eqref{dec} and Lemma~\ref{lemma:qualified},
\begin{align}
&H\left(V_{p}, V_{p+1} \mid \left\{W_k\right\}_{k\in[K]\backslash\{1\}} \right)\notag\\ \overset{(\ref{dec})}{=}& H\left(V_{p}, V_{p+1}\right) - I\left(V_{p}, V_{p+1},\left\{W_k\right\}_{k \in \mathcal{M}} ;\left\{W_k\right\}_{k\in[K]\backslash\{1\}} \right) \\
\overset{(\ref{eq:qualified})}{=}& 2ML+o(L) - I\left(\left\{W_k\right\}_{k \in \mathcal{M}} ;\left\{W_k\right\}_{k\in[K]\backslash\{1\}} \right)\notag\\
&-\underbrace{I\left(V_{p}, V_{p+1};\left\{W_k\right\}_{k\in[K]\backslash(\{1\}\cup\mathcal{M})}|\left\{W_k\right\}_{k \in \mathcal{M}}  \right)}_{ \overset{}{=}0}\label{eq:lem5tt1}\\
\overset{(\ref{h1})}{=}&ML+o(L),
\end{align}
where the second term in \eqref{eq:lem5tt1} equals $ML$ since $W_1\notin $ $\{W_k\}_{k\in\mathcal{M}}$ and $\{W_k\}_{k\in\mathcal{M}}\subseteq\{W_k\}_{k\in[K]\backslash\{1\}}$, which implies $I(\{W_k\}_{k\in\mathcal{M}};\{W_k\}_{k\in[K]\backslash\{1\}})=H(\{W_k\}_{k\in\mathcal{M}})=$ $ML$; the third term in \eqref{eq:lem5tt1} is zero because $0\le I(V_p,V_{p+1};\{W_k\}_{k\in[K]\backslash(\{1\}\cup\mathcal{M})}\mid\{W_k\}_{k\in\mathcal{M}})\le I(V_p,V_{p+1};\{W_k\}_{k\in[K]\backslash\mathcal{M}}\mid\{W_k\}_{k\in\mathcal{M}})\overset{(\ref{sec})}{=}0$, and hence $I(V_p,V_{p+1};\{W_k\}_{k\in[K]\backslash(\{1\}\cup\mathcal{M})}\mid\{W_k\}_{k\in\mathcal{M}})=0$.

Combining the two cases and (\ref{eq:lm5s1}) proves~\eqref{eq:signal}.

We now prove~\eqref{eq:signal0}.
The lower bound follows from
\begin{eqnarray}
&&H\!\left(V_1, V_P \mid \{W_k\}_{k\in[K]\setminus\{1\}}\right)\notag\\
&\ge&
H\!\left(V_1 \mid \{W_k\}_{k\in[K]\setminus\{1\}}\right)
\overset{(\ref{eq:size})}{=} ML + o(L).
\end{eqnarray}

For the upper bound, by submodularity of entropy,
\begin{eqnarray}
&&\sum_{p=1}^{P-1}
H\!\left(V_p, V_{p+1} \mid \{W_k\}_{k\in[K]\setminus\{1\}}\right)\notag\\
&\ge&
H\!\left(V_1,\ldots,V_P \mid \{W_k\}_{k\in[K]\setminus\{1\}}\right) \notag\\
&&+
\sum_{p=2}^{P-1}
H\!\left(V_p \mid \{W_k\}_{k\in[K]\setminus\{1\}}\right).
\end{eqnarray}
Using~\eqref{eq:signal} and~\eqref{eq:size}, we obtain
\begin{eqnarray}
&&(P-1)ML + o(L)
\ge H\!\left(V_1, V_P \mid \{W_k\}_{k\in[K]\setminus\{1\}}\right)\notag\\
&&~~~~~~~~~~~~~~~~~~~~~~~~~~~+ (P-2)ML + o(L)\\
&\Rightarrow&H\!\left(V_1, V_P \mid \{W_k\}_{k\in[K]\setminus\{1\}}\right)\leq ML+o(L).
\end{eqnarray}
This completes the proof.

\hfill$\QED$

After establishing the above lemmas, we are ready to demonstrate the contradiction as follows. Recall that from $(V_1,V_P)$, we can recover $W_1$, i.e., $W_1\in f(\{V_1,V_P\})$. 
\begin{eqnarray}
&&ML+o(L) \notag\\ &\overset{(\ref{eq:signal0})}{=}& H\left(V_1, V_P \mid \left\{W_k\right\}_{k\in[K]\backslash\{1\}} \right) \\
&\overset{(\ref{dec})}{=}& H\left(V_1, V_P, W_1 \mid \left\{W_k\right\}_{k\in[K]\backslash\{1\}} \right) \\
&=& H\left( W_1 \mid \left\{W_k\right\}_{k\in[K]\backslash\{1\}} \right) \notag\\
&&+~H\left(V_1, V_P \mid \left\{W_k\right\}_{k\in[K]} \right) \\
&\geq& H\left( W_1 \mid \left\{W_k\right\}_{k\in[K]\backslash\{1\}} \right) \notag\\
&&+~H\left(V_1 \mid \left\{W_k\right\}_{k\in[K]} \right) \\
&\overset{(\ref{h1})(\ref{eq:size})}{=}&  L + ML+o(L). \label{eq:d1}
\end{eqnarray}
Normalizing \eqref{eq:d1} by $L$ and letting $L\to\infty$, we obtain $M \geq 1+M$, which is a contradiction. Hence, the proof of the only-if part is complete.

\subsection{Proof of the ``If'' Part}\label{sec:thm22}

We prove that under the condition of Theorem~\ref{thm:d2}, the source key capacity equals $1/M$. The proof proceeds in two steps. We first establish the upper bound $R_Z \le 1/M$, and then show that this rate is achievable.

\paragraph*{Upper bound.}
Since $\widetilde{G}$ is nonempty, there exists a qualified edge $\{V_i, V_j\}$ associated with a source set $\{W_k\}_{k \in \mathcal{M}}$, where $|\mathcal{M}| = M$. Moreover, by assumption, for any $V_j \in \mathcal{V}\setminus \mathcal{V}_d$ we have $\mathcal{C}(V_j) = \emptyset$. The correctness constraint~\eqref{dec} implies
\begin{eqnarray}
&&ML\notag\\
&\overset{(\ref{h1})}{=}& H\left(\left\{W_k\right\}_{k\in\mathcal{M}}\right)\\
&\overset{(\ref{dec})}{=}& I\left(V_i, V_j; \left\{W_k\right\}_{k\in\mathcal{M}}\right)\\
&\overset{(\ref{eq:ind0})}{=}& I\left(V_i; \left\{W_k\right\}_{k\in\mathcal{M}} | V_j \right) \label{eq:s1}\\
&\leq& H(V_i) \\
&=&H\left(V_i|\left\{W_k \right\}_{k\in [K]}\right)+\underbrace{I\left(V_i; \left\{ W_k \right\}_{k\in [K]} \right)}_{ \overset{(\ref{eq:ind0})}{=}0}\\
&=&\underbrace{H\left(V_i|\left\{W_k \right\}_{k\in [K]},\mathcal{Z}\right)}_{ \overset{(\ref{messagevn})}{=}0}+I\left(V_i;\mathcal{Z}|\left\{W_k \right\}_{k\in [K]}\right)~~~\\
&\leq &H(\mathcal{Z})\\
&\overset{(\ref{h1})}{=}&L_Z. \label{eq:s1t1}
\end{eqnarray}
The equality in~\eqref{eq:s1} follows from the fact that $\mathcal{C}(V_j)=\emptyset$ for all $V_j \in \mathcal{V}\setminus \mathcal{V}_d$, together with the independence condition~\eqref{eq:ind0}. Therefore,
\begin{eqnarray}
    R \overset{(\ref{rate})}{=} \frac{L}{L_Z} \overset{(\ref{eq:s1t1})}{\le} \frac{1}{M}.
\end{eqnarray}

We now present a secure storage code construction that achieves the source key rate $R_Z = 1/M$ when the graph $G=(\mathcal{V},\mathcal{E})$ satisfies the condition in Theorem~\ref{thm:d2}. The proposed scheme generalizes the construction in Section~\ref{sec:thm11}.

Let the alphabet of each source symbol $W_k$ be the finite field $\mathbb{F}_q$, where $q > M|\mathcal{E}|$. 
The source key $\mathcal{Z}$ consists of $M$ field symbols from $\mathbb{F}_q$. 
Each coded symbol $V_n$ consists of $M$ field symbols from $\mathbb{F}_q$. 
Consequently, the source symbol $W_k$ occupies one field symbol, while the source key $\mathcal{Z}$ occupies $M$ field symbols, yielding the source key rate $R_Z = H(W_k)/H(\mathcal{Z}) = 1/M$, as desired.

Degenerate nodes $\mathcal{V}_d$ together with their incident edges are trivial, and thus it suffices to focus on the non-degenerate subgraph $\widetilde{G}$ of $G$.


\textbf{Characteristic graphs $G^{[k]}$:}  
Fix any $k \in [K]$ and consider the characteristic graph $G^{[k]}$, which consists of $U^{[k]}$ unqualified components. Choose a prime field size $q$ satisfying $q > \max_{k \in [K]} U^{[k]}$.

For each unqualified component indexed by $u \in [U^{[k]}]$, all nodes within this component are assigned the same coded symbol. Specifically, for any node $V^{[k]}$ in the $u$-th unqualified component of $G^{[k]}$, define
\begin{eqnarray}
V^{[k]} = {\bf h}_{u}^{[k]} W_k + {\bf Z}, \label{eq:cc1}
\end{eqnarray}
where ${\bf h}_{u}^{[k]} \in \mathbb{F}_q^{M \times 1}$ is a coding vector and ${\bf Z} \in \mathbb{F}_q^{M \times 1}$ is an i.i.d. uniform noise vector, independent of all source symbols. The same source key ${\bf Z}$ is shared across all characteristic graphs.

The assignment in \eqref{eq:cc1} satisfies the following properties:

\begin{itemize}
    \item \textbf{Recovery over qualified edges:}  
    For any qualified edge $\{V_i^{[k]}, V_j^{[k]}\}$ corresponding to a set of source symbols $\{W_k\}_{k\in\mathcal{M}} \subseteq \{W_k\}_{k\in [K]}$, the two nodes belong to different unqualified components and thus have linearly independent coding vectors ${\bf h}_{u_i}^{[k]}$ and ${\bf h}_{u_j}^{[k]}$. Therefore,
    \begin{eqnarray}
    V_i^{[k]} - V_j^{[k]} = \big({\bf h}_{u_i}^{[k]} - {\bf h}_{u_j}^{[k]}\big) W_k={\bf h}_{ij}^{[k]}  W_k,  k \in \mathcal{M},
    \label{eq:cc22}
    \end{eqnarray}

    \item \textbf{Equality over unqualified edges:}  
    For any unqualified edge $\{V_i^{[k]}, V_j^{[k]}\}$ associated with source symbols $\{W_k\}_{k\in [K] \setminus \mathcal{M}}$, both nodes lie in the same unqualified component. Hence,
    \begin{eqnarray}
    V_i^{[k]} &=& V_j^{[k]}, \quad k \in [K] \setminus \mathcal{M}.
    \label{eq:cc333}
    \end{eqnarray}
\end{itemize}

\textbf{Global encoding.}
For the original graph $G$, the coded symbol stored at each node $V$ is defined as the superposition
\begin{eqnarray}
V = \sum_{k\in[K]} V^{[k]}. \label{eq:cc2}
\end{eqnarray}

We now show that there exists a choice of coding vectors ${\bf h}_{u}^{[k]}$, $k\in[K]$, $u\in[U^{[k]}]$, such that the construction in \eqref{eq:cc1} and \eqref{eq:cc2} satisfies both correctness and security. To this end, each entry of ${\bf h}_{u}^{[k]}$ is chosen independently and uniformly at random from $\mathbb{F}_q$.

The source key rate is given by $R_Z=L/L_Z=1/M$, as desired.

\textbf{Correctness.}
Consider any qualified edge $\{V_i,V_j\}$ with $f(\{V_i,V_j\})=\mathcal{M}$ and $|\mathcal{M}|=M$. We have
\begin{eqnarray}
V_i - V_j 
&\overset{(\ref{eq:cc2})}{=}& \sum_{k\in[K]} \left(V^{[k]}_i - V^{[k]}_j\right) \notag\\
&\overset{(\ref{eq:cc1})(\ref{eq:cc333})}{=}& \sum_{k\in\mathcal{M}} \left(V^{[k]}_i - V^{[k]}_j\right) \label{eq:t1}\\
&\overset{(\ref{eq:cc1})(\ref{eq:cc22})}{=}& {\bf H}_{ij} \left\{W_k\right\}_{k\in\mathcal{M}}. \label{eq:t2}
\end{eqnarray}
Here, equation~\eqref{eq:t1} follows because for any $k \in [K] \setminus \mathcal{M}$, the edge $\{V^{[k]}_i, V^{[k]}_j\}$ is unqualified. By construction, both nodes lie in the same unqualified component of $G^{[k]}$, which implies $V^{[k]}_i = V^{[k]}_j$ as shown in \eqref{eq:cc333}.  
Equation~\eqref{eq:t2} holds because for each $k \in \mathcal{M}$, the edge $\{V^{[k]}_i, V^{[k]}_j\}$ is qualified and spans two distinct unqualified components in $G^{[k]}$, since no qualified edge is internal to an unqualified component. Consequently, the difference $V^{[k]}_i - V^{[k]}_j$ contributes a linear term in $W_k$, as shown in \eqref{eq:cc22}. Collecting these terms for all $k \in \mathcal{M}$ yields an $M \times M$ linear transformation over $\mathbb{F}_q$, characterized by the matrix ${\bf H}_{ij}$ whose entries are determined by the coding vectors ${\bf h}_{u}^{[k]}$.

Correct decoding requires that each matrix ${\bf H}_{ij}$ be full rank. 
Consider the determinant $|{\bf H}_{ij}|$ as a polynomial in the coding vector variables ${\bf h}_{u}^{[k]}$, for $k \in [K]$ and $u \in [U^{[k]}]$. 
This polynomial has degree $M$ and is not identically zero, since there exists at least one assignment of the coding vectors for which $|{\bf H}_{ij}| \neq 0$.

Define the polynomial
\begin{eqnarray}
\text{poly} \triangleq \prod_{\{V_i, V_j\} \in \mathcal{E}} |{\bf H}_{ij}|,
\end{eqnarray}
which is a nonzero polynomial of degree at most $M |\mathcal{E}|$.  
By the Schwartz–Zippel lemma, if the field size satisfies $q > M |\mathcal{E}|$, there exists a choice of coding vectors such that $\text{poly} \neq 0$.  

Consequently, all matrices ${\bf H}_{ij}$ corresponding to qualified edges are invertible, and the $M$ desired source symbols can be recovered, establishing the correctness of the scheme.

\textbf{Security.}  
We show that each qualified and unqualified edge satisfies the information-theoretic security constraint.  

For any qualified edge $\{V_i, V_j\}$ associated with a set of source symbols $\{W_k\}_{k \in \mathcal{M}}$, we have
\begin{align}
    &I\Big(V_i, V_j; \{W_k\}_{k \in [K]\backslash \mathcal{M}} \,\big|\, \{W_k\}_{k \in \mathcal{M}} \Big) \notag\\
    =&\, I\Big(V_i - V_j, V_j; \{W_k\}_{k \in [K]\backslash \mathcal{M}} \,\big|\, \{W_k\}_{k \in \mathcal{M}} \Big)\\
    \overset{\substack{(\ref{eq:cc1})\\(\ref{eq:cc2})\\(\ref{eq:t2})}}{=}&\, I\Big({\bf H}_{ij} \{W_k\}_{k\in\mathcal{M}}, \sum_{k\in[K]} ({\bf h}_{u}^{[k]} W_k + {\bf Z});\notag\\
    &\{W_k\}_{k \in [K]\backslash\mathcal{M}} \,\big|\, \{W_k\}_{k \in \mathcal{M}} \Big)\\
    =&\, H\Big({\bf H}_{ij} \{W_k\}_{k\in\mathcal{M}}, \sum_{k\in[K]} ({\bf h}_{u}^{[k]} W_k + {\bf Z}) \,\big|\, \{W_k\}_{k \in \mathcal{M}} \Big) \notag\\
    & -H\Big({\bf H}_{ij} \{W_k\}_{k\in\mathcal{M}}, \sum_{k\in[K]} ({\bf h}_{u}^{[k]} W_k + {\bf Z}) \,\big|\, \{W_k\}_{k \in [K]} \Big)\\
    =&\, H\Big(\sum_{k\in[K]\setminus \mathcal{M}} {\bf h}_{u}^{[k]} W_k + K{\bf Z} \,\big|\, \{W_k\}_{k \in \mathcal{M}} \Big) \notag\\
    &- H\Big(K {\bf Z} \,\big|\, \{W_k\}_{k \in [K]} \Big)\\
    =&\, H\Big(\sum_{k\in[K]\setminus \mathcal{M}} {\bf h}_{u}^{[k]} W_k + K{\bf Z}, \{W_k\}_{k \in \mathcal{M}} \Big)\notag\\
       & - H\Big(\{W_k\}_{k \in \mathcal{M}}\Big) - H\Big(K {\bf Z} \,\big|\, \{W_k\}_{k \in [K]} \Big) \label{secutt1}\\
    \overset{(\ref{ind})(\ref{h1})}{=}&\, 2ML - ML - ML = 0,
\end{align}
where the third term in \eqref{secutt1} equals $ML$ because the source key ${\bf Z}$ is independent of the source symbols $\{W_k\}_{k \in [K]}$ and ${\bf Z} \in \mathbb{F}_q^{M \times 1}$.  

For any unqualified edge $\{V_i, V_j\}$, where $\mathcal{M} = \emptyset$ and $V_i = V_j$, we have
\begin{eqnarray}
    && I\Big(V_i, V_j; \{W_k\}_{k \in [K]} \Big) \notag\\
    &=& I\Big(V_i; \{W_k\}_{k \in [K]} \Big)\\
    &\overset{(\ref{eq:cc1})(\ref{eq:cc2})}{=}& I\Big(\sum_{k\in[K]} ({\bf h}_{u}^{[k]} W_k + {\bf Z}); \{W_k\}_{k \in [K]} \Big)\\
    &=& H\Big(\sum_{k\in[K]} ({\bf h}_{u}^{[k]} W_k + {\bf Z}) \Big) \notag\\
    &&- H\Big(\sum_{k\in[K]} ({\bf h}_{u}^{[k]} W_k + {\bf Z}) \,\big|\, \{W_k\}_{k \in [K]} \Big)\\
    &=& H\Big(\sum_{k\in[K]} ({\bf h}_{u}^{[k]} W_k + {\bf Z}) \Big) \notag\\
    &&- H\Big(K {\bf Z} \,\big|\, \{W_k\}_{k \in [K]} \Big)\\
    &\overset{(\ref{ind})(\ref{h1})}{=}& ML - ML = 0.
\end{eqnarray}

This completes the proof of security.

\section{Proof of Theorem \ref{thm:2d}} \label{sec:thm3}
In this section, we prove Theorem \ref{thm:2d}. 
We first prove the ``only if'' part in Section \ref{sec:thm31} and then the ``if'' part in Section \ref{sec:thm32}.

\subsection{Proof of the ``Only if'' Part}\label{sec:thm31}

We first state a useful property that holds for any secure storage code constructed without a source key. This property is formalized in the following lemma. 
Note that when $L_Z = o(L)$, we have\footnote{The same proof applies when the $o(L)$ term is exactly zero, i.e., when the rate is exactly achievable.}
\begin{eqnarray}
H\left(V_i, V_j \mid \{ W_k \}_{k \in [K]} \right) \overset{(\ref{messagevn})}{=} o(L). \label{arate}
\end{eqnarray}
Here, $V_i$ and $V_j$ denote the coded symbols stored at the corresponding nodes.

\begin{lemma}[Determinism Given Common Sources]\label{lemma:det}
When $L_Z=o(L)$, for any qualified edge $\{V_i,V_j\}$, the coded symbols $V_i$ and $V_j$ are asymptotically deterministic given their common source symbols, i.e.,
\begin{eqnarray}
H\!\left(
V_i,V_j \,\middle|\,
\{W_k\}_{k\in \mathcal{C}(V_i)\cup \mathcal{C}(V_j)}
\right)
= o(L).
\label{eq:det}
\end{eqnarray}
\end{lemma}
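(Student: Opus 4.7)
The plan is to reduce the two-node conditional entropy to a sum of single-node conditional entropies via subadditivity and then bound each single-node entropy by combining the independence result of Lemma~\ref{lemma:ind} with the assumed key budget $L_Z=o(L)$.

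First I would write, by subadditivity and the fact that conditioning on a larger set cannot increase entropy, the inequality
\begin{eqnarray}
&&H\!\left(V_i,V_j \,\middle|\, \{W_k\}_{k\in\mathcal{C}(V_i)\cup\mathcal{C}(V_j)}\right) \notag\\
&\leq& H\!\left(V_i \,\middle|\, \{W_k\}_{k\in\mathcal{C}(V_i)}\right) + H\!\left(V_j \,\middle|\, \{W_k\}_{k\in\mathcal{C}(V_j)}\right).
\end{eqnarray}
This reduces the claim to showing $H(V\mid \{W_k\}_{k\in\mathcal{C}(V)})=o(L)$ for any non-trivial coded symbol $V$ when $L_Z=o(L)$.

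Next I would establish this node-level bound in two steps. Using (\ref{messagevn}) and (\ref{h1}),
\begin{eqnarray}
H\!\left(V \,\middle|\, \{W_k\}_{k\in[K]}\right)
&=& \underbrace{H\!\left(V \,\middle|\, \{W_k\}_{k\in[K]},\mathcal{Z}\right)}_{=0}\notag\\
&&+\, I\!\left(V;\mathcal{Z} \,\middle|\, \{W_k\}_{k\in[K]}\right) \\
&\leq& H(\mathcal{Z}) = L_Z = o(L).
\end{eqnarray}
Then I would invoke Lemma~\ref{lemma:ind} (equation~(\ref{eq:ind})) to remove the conditioning on the non-common sources at no cost:
\begin{eqnarray}
&&H\!\left(V \,\middle|\, \{W_k\}_{k\in\mathcal{C}(V)}\right) \notag\\
&=& H\!\left(V \,\middle|\, \{W_k\}_{k\in[K]}\right) \notag\\
&&+ \underbrace{I\!\left(V;\{W_k\}_{k\in[K]\setminus\mathcal{C}(V)} \,\middle|\, \{W_k\}_{k\in\mathcal{C}(V)}\right)}_{\overset{(\ref{eq:ind})}{=}0} \\
&=& o(L).
\end{eqnarray}
Combining this with the subadditivity bound above yields~(\ref{eq:det}).

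The argument is essentially a direct chaining of two facts already in hand (Lemma~\ref{lemma:ind} and the entropy bound from~(\ref{messagevn}) and~(\ref{h1})), so there is no substantial obstacle; the only point that requires a bit of care is verifying the applicability of Lemma~\ref{lemma:ind} at an arbitrary node (including the case when $\mathcal{C}(V)=\emptyset$, where the statement reduces to~(\ref{eq:ind0})) and ensuring that the $o(L)$ bound propagates through the subadditivity step without accumulation issues, which it does since the sum of two $o(L)$ terms remains $o(L)$.
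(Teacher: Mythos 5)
Your proposal is correct and essentially mirrors the paper's proof: both reduce to single-node bounds $H(V\mid\{W_k\}_{k\in\mathcal{C}(V)})=o(L)$ via subadditivity of conditional entropy and both obtain the single-node bound by combining Lemma~\ref{lemma:ind} with the budget $L_Z=o(L)$ through~(\ref{messagevn}). The only (cosmetic) difference is sequencing: you bound $H(V\mid\{W_k\}_{k\in[K]})$ directly by $L_Z$ and then strip the non-common conditioning with~(\ref{eq:ind}), whereas the paper first passes through $H(V_i,V_j\mid\{W_k\}_{k\in\mathcal{M}})=o(L)$ using the security constraint~(\ref{sec}) before applying~(\ref{eq:ind}); the two routes are interchangeable.
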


\begin{proof}
Consider any qualified edge $\{V_i,V_j\}$ with
$f(\{V_i,V_j\})=\{W_k\}_{k\in \mathcal{M}}$ and $|\mathcal{M}|=M$.
From the security constraint~\eqref{sec}, we have
\begin{eqnarray}
&&H\!\left(V_i,V_j \mid \{W_k\}_{k\in\mathcal{M}}\right)\notag\\
&=& H\!\left(V_i,V_j \mid \{W_k\}_{k\in[K]}\right) \notag\\
&&+ I\!\left(
V_i,V_j;
\{W_k\}_{k\in[K]\setminus\mathcal{M}}
\,\middle|\,
\{W_k\}_{k\in\mathcal{M}}
\right) \\
&\overset{(\ref{arate}),(\ref{sec})}{=}& o(L).
\label{eq:cvt1}
\end{eqnarray}

From~\eqref{eq:cvt1}, we further obtain
\begin{eqnarray}
o(L)&=& H\left( V_i \mid \left\{W_k\right\}_{k \in \mathcal{M}} \right) \\
&=& H\left( V_i \mid \left\{W_k\right\}_{k \in \mathcal{C}(V_i)}, \left\{W_k\right\}_{k \in \mathcal{M}\backslash\mathcal{C}(V_i)} \right) \\
&=& H\left( V_i \mid \left\{W_k\right\}_{k \in \mathcal{C}(V_i)} \right) -  \notag\\
&&I\left(V_i;\left\{W_k\right\}_{k \in \mathcal{M}\backslash\mathcal{C}(V_i)}\mid \left\{W_k\right\}_{k \in \mathcal{C}(V_i)}    \right) \\
&\geq& H\left( V_i \mid \left\{W_k\right\}_{k \in \mathcal{C}(V_i)} \right) - \notag\\
&&I\left(V_i;\left\{W_k\right\}_{k \in [K]\backslash\mathcal{C}(V_i)} \mid \left\{W_k\right\}_{k \in \mathcal{C}(V_i)} \right) ~~\\
&\overset{(\ref{eq:ind})}{=}& H\left( V_i \mid \left\{W_k\right\}_{k \in \mathcal{C}(V_i)} \right). \label{eq:cvt2}
\end{eqnarray}

Replacing $i$ with $j$ gives
\begin{eqnarray}
H\!\left(
V_j \mid \{W_k\}_{k\in\mathcal{C}(V_j)}
\right)
= o(L).
\label{eq:cvt3}
\end{eqnarray}

Finally, by the subadditivity of conditional entropy, we have
\begin{eqnarray}
    &&H\left(V_i,V_j \mid \left(W_k \right)_{k \in \mathcal{C}(V_i)\cup\mathcal{C}(V_j)} \right)\notag\\
    &\leq&H\left(V_i\mid \left(W_k \right)_{k \in \mathcal{C}(V_i)\cup\mathcal{C}(V_j)} \right)\notag\\
    &&+H\left(V_j \mid \left(W_k \right)_{k \in \mathcal{C}(V_i)\cup\mathcal{C}(V_j)} \right)\\
    &\leq&H\left(V_i\mid \left(W_k \right)_{k \in \mathcal{C}(V_i)} \right)\notag\\
    &&+H\left(V_j \mid \left(W_k \right)_{k \in \mathcal{C}(V_j)} \right)\\
    &\overset{(\ref{eq:cvt2})(\ref{eq:cvt3})}{=}&o(L).
\end{eqnarray}

Since conditional entropy is non-negative, the proof is complete.
\end{proof}

\hfill \QED

With Lemma~\ref{lemma:det} in hand, we are ready to prove the ``only if'' part of Theorem~\ref{thm:2d}.  
We show that if the condition in Theorem~\ref{thm:2d} is violated, then a secure storage code without a source key cannot exist.  
The proof is by contradiction: suppose that there exists a secure storage code with $L_Z=0$.

Suppose the condition is violated; that is, there exists a qualified edge $\{V_i, V_j\}$ such that $\mathcal{C}(V_i) \cup \mathcal{C}(V_j) \neq f(\{V_i, V_j\}) = \{W_k\}_{k \in \mathcal{M}}$.  
Since $\mathcal{C}(V_i) \cup \mathcal{C}(V_j) \subseteq \{W_k\}_{k \in \mathcal{M}}$ and the union is not equal to $\{W_k\}_{k \in \mathcal{M}}$, it follows that $|\mathcal{C}(V_i) \cup \mathcal{C}(V_j)| < M$.

Consider the joint entropy $H(V_i,V_j)$. Using the correctness constraint~\eqref{dec}, we have
\begin{align}
H(V_i,V_j)
&\overset{(\ref{dec})}{=}
H\!\left(V_i,V_j,\{W_k\}_{k\in\mathcal{M}}\right) \notag\\
&=
H\!\left(\{W_k\}_{k\in\mathcal{M}}\right)
+
H\!\left(V_i,V_j \mid \{W_k\}_{k\in\mathcal{M}}\right) \notag\\
&\overset{(\ref{ind})}{\ge}
ML + o(L).
\label{eq:pf3t1}
\end{align}

On the other hand, we can also bound $H(V_i,V_j)$ using Lemma~\ref{lemma:det}:
\begin{align}
H(V_i,V_j)
&=
H\!\left(
V_i,V_j \mid \{W_k\}_{k\in \mathcal{C}(V_i)\cup\mathcal{C}(V_j)}
\right) \notag\\
&\quad
+ I\!\left(
V_i,V_j;
\{W_k\}_{k\in \mathcal{C}(V_i)\cup\mathcal{C}(V_j)}
\right) \notag\\
&\overset{(\ref{eq:det})}{\le}
o(L)
+
H\!\left(
\{W_k\}_{k\in \mathcal{C}(V_i)\cup\mathcal{C}(V_j)}
\right) \notag\\
&\overset{(\ref{h1})}{=}
|\mathcal{C}(V_i)\cup\mathcal{C}(V_j)|\,L + o(L) \notag\\
&<
ML + o(L).
\end{align}

Combining the two bounds yields
$ML+o(L) \le H(V_i,V_j) < ML+o(L)$,
which is a contradiction.  
Therefore, a secure storage code with $L_Z=0$ cannot exist, completing the proof of the ``only if'' part.

\subsection{Proof of the ``If'' Part}\label{sec:thm32}

We now show that if the condition in Theorem~\ref{thm:2d} is satisfied, then a secure storage code without a source key exists, i.e., $L_Z=0$.

{\bf Code Construction:}  
Consider any graph $G=(\mathcal{V},\mathcal{E})$ that satisfies the condition in Theorem~\ref{thm:2d}.  
Let the alphabet of each source symbol $W_k$ be the finite field $\mathbb{F}_q$, where $q > M|\mathcal{E}|$.  
Each coded symbol $V_n$ consists of $|\mathcal{C}(V_n)|$ symbols from $\mathbb{F}_q$. Specifically, we define
\begin{eqnarray}
V_n = {\bf H}_n \times \{W_k\}_{k\in\mathcal{C}(V_n)}, \quad \forall n\in[N],
\label{eq:cc}
\end{eqnarray}
where $\{W_k\}_{k\in\mathcal{C}(V_n)}\in\mathbb{F}_q^{|\mathcal{C}(V_n)| \times 1}$ is a column vector stacking the corresponding source symbols, and ${\bf H}_n\in\mathbb{F}_q^{|\mathcal{C}(V_n)| \times |\mathcal{C}(V_n)|}$ is the associated coding matrix.  

To prove existence, we generate each ${\bf H}_n$, $n\in[N]$, randomly by choosing its entries independently and uniformly from $\mathbb{F}_q$.

{\bf Security Verification:}  
Since the condition in Theorem~\ref{thm:2d} holds, we have the following:  
\begin{itemize}
    \item For any qualified edge $\{V_i,V_j\}$ with $f(\{V_i,V_j\})=\{W_k\}_{k\in\mathcal{M}}$ and $|\mathcal{M}|=M$, the union of the associated source sets satisfies $\mathcal{C}(V_i)\cup\mathcal{C}(V_j)=\{W_k\}_{k\in\mathcal{M}}$.  
    \item For any unqualified edge $\{V_i,V_j\}$ with $f(\{V_i,V_j\})=\emptyset$, we have $\mathcal{C}(V_i)\cup\mathcal{C}(V_j)=\emptyset$.
\end{itemize}

From the construction in~\eqref{eq:cc}, the coded symbols $(V_i,V_j)$ for a qualified edge include only the desired source symbols $\{W_k\}_{k\in\mathcal{M}}$ and no undesired source symbols $\{W_k\}_{k\in[K]\setminus \mathcal{M}}$. Since the source symbols are independent, no information about undesired sources is revealed.  
Similarly, for an unqualified edge, we have $\mathcal{C}(V_i)=\mathcal{C}(V_j)=\emptyset$, so no source symbols are assigned, and security is trivially satisfied.

{\bf Correctness Verification:}  
For any qualified edge $\{V_i,V_j\}$, the coded symbols $(V_i,V_j)$ provide $M$ linear combinations of the $M$ desired source symbols. Stacking $V_i$ and $V_j$ together yields
\begin{eqnarray}
[V_i;V_j] = {\bf H}_{ij} \times \{W_k\}_{k\in\mathcal{M}},
\end{eqnarray}
where ${\bf H}_{ij}\in\mathbb{F}_q^{M\times M}$ is determined by ${\bf H}_i$, ${\bf H}_j$, $\mathcal{C}(V_i)$, and $\mathcal{C}(V_j)$.  

To guarantee decodability, we consider the determinant $|{\bf H}_{ij}|$ as a polynomial in the entries of $\{{\bf H}_n\}_{n\in[N]}$. This polynomial has degree $M$ and is not identically zero, since there exists at least one realization of $\{{\bf H}_n\}_{n\in[N]}$ such that $|{\bf H}_{ij}|\neq 0$.  

Define
\begin{eqnarray}
\mathrm{poly} \triangleq \prod_{\{V_i,V_j\}\in\mathcal{E}} |{\bf H}_{ij}|,
\end{eqnarray}
which is a nonzero polynomial of degree at most $M|\mathcal{E}|$.  
By the Schwartz--Zippel lemma~\cite{Demillo_Lipton,Schwartz,Zippel}, a uniform random choice of $\{{\bf H}_n\}_{n\in[N]}$ over $\mathbb{F}_q$ with $q>M|\mathcal{E}|$ ensures that $\mathrm{poly}\neq 0$ with positive probability.  
Hence, there exists a realization of $\{{\bf H}_n\}_{n\in[N]}$ such that $|{\bf H}_{ij}|\neq 0$ for all qualified edges. Consequently, all desired source symbols can be recovered, establishing correctness.

\section{Discussion}
This work studies secure storage over graphs with information theoretic security constraints and characterizes the extremal behavior of the source key capacity. By focusing on graphs that achieve the largest possible source key rates under nontrivial security requirements, we identify precise structural conditions under which such extremal rates are achievable.

Our results show that the source key capacity is fundamentally governed by the interaction between graph topology and alignment constraints imposed by security and correctness. In particular, the presence of internal qualified edges captures an inherent conflict between reusing common noise for secrecy and diversifying coded symbols for decodability, and serves as the key obstruction to achieving extremal source key rates. We further provide a complete characterization of graphs that admit secure storage without using any source key.

While this work focuses on symmetric settings with uniform source sizes and edge requirements, several natural extensions remain open. These include going beyond extremal source key rate formulations, allowing heterogeneous edge constraints, and relaxing the structural assumptions on common sources. Exploring these directions may reveal richer tradeoffs and provide deeper insights into secure storage over general graphs.

\let\url\nolinkurl
\bibliography{Thesis}
\end{document}